\newtheorem{proposition}{Proposition}
\newtheorem{assumption}{Assumption}
\newtheorem{lemma}{Lemma}
\newtheorem{remark}{Remark}
\newtheorem{theorem}{Theorem}[section]
\renewenvironment{thebibliography}[1]{%
\begin{oldthebibliography}{#1}%
\setlength{\baselineskip}{.8em}
\linespread{1.5}
\small
\setlength{\parskip}{0ex}%
\setlength{\itemsep}{.05em}%
}%
{%
\end{oldthebibliography}%
}
\title{Optimal Contract Design with Quadratic Effort Cost }
\author{Xinfu Chen\footnote{The School of Mathematics, Southwestern University of Finance and Economics. peisw@swufe.edu.cn.}\qquad  Shuaijie Qian \footnote{Department of Mathematics, The Hong Kong University of Science and Technology. sjqian@ust.hk.}\qquad  Guan Qiao \footnote{Department of Mathematics, The Hong Kong University of Science and Technology. gqiaoaa@connect.ust.hk.}
}
\begin{document}

\maketitle

\begin{abstract}
The existence of an optimal contract of the principal-agent problem is a central issue in contract design. According to Cvitani{\'c} et al. \cite{cvitanic2009optimal}, such an optimal contract can be derived from the existence of a classical solution to the corresponding Hamilton–Jacobi–Bellman (HJB) equation, which is a degenerate, fully nonlinear parabolic equation. In this work, we follow their model, consider the case with drift control, and prove the existence of the classical solution to the HJB equation.
\end{abstract}

\section{Introduction}
In the effort model of the principal-agent problem, a principal delegates a project to an agent. The agent can exert effort to improve the output of the project. In order to motivate the agent to exert effort, the principal proposes a contract at the beginning. This contract is based on the project's output, since the principal can observe the output but not the effort levels. In view of this, the agent will exert effort subject to the trade-off between contract payoff and effort costs, while the principal will optimize the contract to balance the project output against the contract payoff. 

This Stackelberg game problem involves the optimization of the contract in a functional space. To circumvent this obstacle, 
Spear and Srivastava \cite{spear1987repeated} use recursive methods and treat the agent's continuation/promised utility as a state variable. This approach is later extended to continuous-time models by DeMarzo and Sannikov \cite{demarzo2006optimal} and  Sannikov \cite{sannikov2008continuous}.

Following this direction, Cvitani{\'c} et al. \cite{cvitanic2009optimal} characterize the optimal contract in a general continuous-time problem with quadratic cost, using the approach of backward stochastic differential equations. This contract is optimal under some assumptions about the existence of optima to a static optimization problem (see Proposition 3.1 of \cite{cvitanic2009optimal}), which is not further proven. 
After that, Cvitani{\'c} et al. \cite{cvitanic2018dynamic} further develop a systematic method to investigate the general principal-agent problem, which covers Cvitani{\'c} et al. \cite{cvitanic2009optimal}, by reformulating the principal's optimization problem as a standard control problem over both the output process and the promised utility process. The existence of an optimal contract is guaranteed if the value function of the standard control problem is sufficiently smooth (see Theorem 3.9 and Proposition 5.4 in \cite{cvitanic2018dynamic}). 

However, this regularity issue of the value function is {not} guaranteed in general by mathematicians. To the best of our knowledge, this type of regularity is only addressed in very special cases (see, e.g., explicit solutions in Cvitani{\'c} et al. \cite{cvitanic2018dynamic} or the one-dimensional case in Possama{\"\i} and Touzi \cite{possamai2025there}). 
This value function can be characterized by the corresponding HJB equation. However,
due to the special structure of the principal-agent problem, this partial differential equation (PDE) is degenerate and fully nonlinear, which is beyond the scope of classical theory for the existence of a classical solution, and usually only Sobolev regularity of $W^{2, 1}_p$ is guaranteed (see Krylov \cite{krylov1987nonlinear}, Chapter 7).

In this paper, by exploiting the special structure of the principal-agent problem with a quadratic cost function, we will construct a classical solution to this PDE problem. 
Given the uniqueness result from the comparison principle, we establish the regularity of the value function, and thus fill the gap in Cvitani{\'c} et al. \cite{cvitanic2009optimal} to verify that the conjectured optimal contract is indeed optimal. 

{\bf Related literature.} 
{Recently, the continuous-time principal-agent problem has received several remarkable extensions. Kr{\v{s}}ek and Possama{\"\i} \cite{krvsek2023randomisation} allow the agent to employ measure-valued controls so that the agent's continuation value can be characterized by a backward stochastic differential equation (BSDE) driven by a martingale measure. Under this setting, the principal's problem is reformulated as a weak control problem and compactification techniques are used to establish the existence of a measure-valued optimal contract. Hubert \cite{hubert2023continuous} adopts a second-order BSDE (2BSDE) to characterize the optimal contract for hierarchical scenarios that allow volatility control. Chiusolo and Hubert \cite{chiusolo2024new} demonstrate that the 2BSDE in \cite{cvitanic2018dynamic} can be simplified to a BSDE by assuming the principal can directly control the quadratic variation. Hern{\'a}ndez and Possama{\"\i} \cite{hernandez2024time} extend the framework of \cite{cvitanic2018dynamic} to time-inconsistent settings, showing that the principal's second-best problem is no longer a classical control problem but instead involves the control of a forward Volterra equation. Mastrolia and Zhang \cite{mastrolia2025agency} consider the mean-field system of agents managing projects with accidents. They characterize the mean-field equilibrium by the McKean-Vlasov equations. Moreover, they provide sufficient and necessary conditions for the optimal contract under weaker conditions than those in \cite{cvitanic2018dynamic}.}

{It is worth mentioning the development of fully nonlinear PDE theory. Wang \cite{wang1989regularity} establishes a priori estimates for viscosity solutions of fully nonlinear parabolic PDEs.
Additionally, PDE-based approaches to finance problems have drawn significant attention in the literature.
Guan et al. \cite{guan2023continuous} prove the existence of a classical solution to the fully nonlinear PDE arising from Markowitz's portfolio-selection problem under different borrowing and saving rates.
}

The structure of the paper is as follows. In Section \ref{sec model}, we introduce the  principal-agent problem. Section \ref{sec auxiliary} presents the (equivalent) standard control problem, proves the equivalence and derives the optimal contract. These results are all based on the assumption of the regularity of the value function. In Section \ref{sec regu w}, we establish the regularity of the value function, which constitutes the main contribution of this paper. Section \ref{conclusion} concludes the paper. 

\section{Model setup} \label{sec model}
In this section, we introduce the model setup of the principal-agent problem under consideration. In particular, we follow the model in Cvitani{\'c} et al. \cite{cvitanic2009optimal}. In Subsection \ref{sec agent}, we introduce the agent's problem for a given contract. In Subsection \ref{sec prin problem}, we introduce the principal's optimal contract design, taking into account the agent's response.    

Following Cvitani{\'c} et al. \cite{cvitanic2009optimal}, we assume that the market consists of a principal and an agent. 
The principal has a project which is delegated to the agent over the time interval $[0, T]$. The agent can control the output process by exerting effort. Precisely, we assume that the project's output process $\left\{ x(s) \right\}_{0\leq s \leq T}$ obeys the following stochastic differential equation (SDE):
\begin{equation}
\begin{cases}
    dx(s) = \sigma {\lambda}(s) ds + \sigma d \mathcal{B}(s), & \text{for } s \in (0, T], \\
    x(0) = x,
\end{cases}
\label{dyn_x}
\end{equation}
where \({\Lambda} = \{ {\lambda}(s) \}_{0 \leq s \leq T}\) is  the effort process chosen by the agent and  $\{\mathcal{B}(s)\}_{s \geq 0}$ is a standard one-dimensional Brownian motion on a filtered probability space 
$\big( \mathbb{S}, \mathscr{F}, \allowbreak \left\{\mathscr{F}_s\right\}_{s \geq 0}, P\big)$
with $\mathcal{B}(0) = 0$ almost surely. The filtration $\left\{\mathscr{F}_s\right\}_{s \geq 0}$ is
generated by the Brownian motion, is right-continuous and every $\mathscr{F}_s$ contains all $P$-null sets of $\mathscr{F}$.
\footnote{Following convention (see, e.g.,  Cvitani{\'c} et al. \cite{cvitanic2018dynamic}), the drift term is defined as $\sigma \lambda$ instead of $\lambda$.  As we will see in what follows, these two formulations are equivalent, up to a rescaling of the cost parameter $c$.}

The principal can observe the output process but not the agent's effort. In order to incentivize the agent to exert effort, the principal specifies a contract at time $0$ based on the terminal output to compensate for the agent's effort costs. Precisely, at time $T$, the principal will provide the agent with a payment of ${C}\left(x(T)\right)$, where $C$ is a function that satisfies the following assumption throughout this paper.    
\begin{assumption}\label{ass C}
    The reward function ${C}\in C^2(\mathbb{R})$ satisfies 
\begin{align}\label{growth C}
{\frac{1}{c_1} \leq C'(x)\leq c_1}, \quad {{-c_1 \leq C''(x) \leq c_1}}, \, \quad \forall x \in \mathbb{R}  
\end{align}    
for a constant $c_1 >1$.
\end{assumption}
We denote this set of functions by $\mathscr{C}$. Essentially, we require the contract to be monotonically increasing with linear growth.
This linear growth condition is natural 
because it is unlikely that the principal will pay more than the project's terminal value at time $T$. 

\subsection{Agent's problem} \label{sec agent}
Following Cvitani{\'c} et al. \cite{cvitanic2009optimal},
we assume that by exerting effort $\lambda(s)$, the 
agent's cost is $\frac{c}{2} \lambda^2(s)$ for a constant $c>0$.  
The agent's objective function is 
$$
J_A^{C;{\Lambda}}(x, t) := \mathbb{E}\left[U_A\left(C\left(x(T)\right)\right)-\frac{c}{2} \int_t^T {\lambda}^2(s) d s\bigg| x(t) = x\right],
$$
where $U_A$ is the agent's utility function.  
Moreover, we require $\Lambda$ to be chosen such that the SDE \eqref{dyn_x} permits a strong solution on $[t, T]$ and $\mathbb{E}\left[|U_A\left(C\left(x(T)\right)\right)|\right] <+\infty$. Denote this admissible set of controls by $\mathcal{A}_\lambda(x, t)$. 

The agent's problem is to find an optimal effort process $\left\{{\lambda}(s)\right\}_{t\leq s \leq T}$ to maximize his objective function, i.e.,
$$
v^C(x, t) := \sup _{\Lambda\in \mathcal{A}_\lambda(x, t)} J_A^{C;{\Lambda}}(x, t).
$$
\begin{assumption} \label{assumption Ua}
Agent's utility function $U_A(x) \in C^{{7}}(\mathbb{R})$ satisfies 
\begin{align}
    &\frac{1}{c_0} \leq U_A'(x) \leq c_0, \quad -c_0 \leq U_A''(x) \leq 0, \quad {-c_0 \leq U_A^{(i)}(x) \leq c_0, \quad 3\leq i\leq 7,}
\end{align}
for a constant $c_0 > 1$.
\end{assumption}

This assumption essentially requires $U_A$ to be asymptotically linear, which is important for deriving a bounded effort process $\lambda$. Based on this, we can establish the comparison principle invoked in Theorem \ref{theorem regu w}. 

{\bf Solution of agent's problem}

\begin{proposition}\label{prop v C2}
If the contract $C(x) \in \mathscr{C}$, 
then $v^C(x, t) \in C^{{\infty}}(\mathbb{R}\times [0, T)) \cap C(\mathbb{R}\times [0, T])$, and it is the unique classical solution to the following PDE problem 
\begin{equation}
\begin{cases}
	& v_t^C(x, t) + \frac{1}{2} \sigma^2 v_{xx}^C(x, t)  + \max \limits_{{\lambda}} \bigg\{ \sigma {\lambda} v_{x}^C(x, t) - \frac{c}{2} {\lambda}^2 \bigg\}   
	\\
    & \qquad = v_t^C(x, t) + \frac{\sigma^2}{2} \left( v_{xx}^C(x, t) + \frac{1}{c} \left[ v_x^C(x, t) \right]^2 \right) = 0, \quad \forall x \in \mathbb{R},\ 0\leq t < T, \\
    & v^C(x, T) = U_A(C(x)), \quad \forall x \in \mathbb{R}, 
\end{cases}
\label{pde_vc}
\end{equation}
satisfying the growth condition 
\begin{align}\label{equ growth v}
|v^C(x, t)| \leq D ( |x|+1), \ \forall x \in \mathbb{R},\ 0\leq t \leq T,\text{ for some constant $D>0$.}
\end{align}
The optimal effort can be derived by 
\begin{equation}
{\lambda}^C(s) = \frac{\sigma}{c} v_x^C(x(s), s),
\label{alpha_c}
\end{equation}
{and $\Lambda^C = \{\lambda^C(s)\}_{t \leq s \leq T} \in \mathcal{A}_\lambda (x,t)$.}
Moreover, there exists a constant $c_\lambda > 1$ such that 
\begin{align}\label{equ bd lambda}
\lambda^C(s) \in [\frac{1}{c_\lambda},c_\lambda].
\end{align}
\end{proposition}
{According to Proposition \ref{prop v C2},}
the corresponding output process $\left\{ x^C(s) \right\}_{t\leq s \leq T}$ is the solution to the SDE
\begin{equation}
\begin{cases}
    dx^C(s) = \sigma{\lambda}^C(s) ds + \sigma d \mathcal{B}(s), \\
    x^C(t) = x.
\end{cases}
\label{dyn_x_c}
\end{equation}
By setting
\begin{equation}
y^C(s)=v^C\left(x^C(s), s\right),
\label{y_c}
\end{equation}
we can calculate using It\^{o}'s lemma and \eqref{pde_vc} that 
\begin{equation}
\begin{aligned}
d y^C(s) & =\left[v_t^C\big(x^C(s), s\big)+\frac{\sigma^2}{2} v_{xx}^C\big(x^C(s), s\big)\right] d s+v_x^C\big(x^C(s), s\big) d x^C(s)\\
&= \left[v_t^C\big(x^C(s), s\big)+\frac{\sigma^2}{2} v_{xx}^C\big(x^C(s), s\big) + \sigma {\lambda}^C(s) v_x^C\big(x^C(s), s\big) \right] ds\\
&\quad + \sigma v_x^C\big(x^C(s), s\big) d \mathcal{B}(s)\\
& =\frac{c}{2} {{\lambda}^C(s)}^2 d s+c {\lambda}^C(s) d \mathcal{B}(s), \quad t \leqslant s \leqslant T.
\end{aligned}
\label{dyn_y_c}
\end{equation}

\subsection{Principal's problem}\label{sec prin problem}
The agent has an outside opportunity, which is modeled as reservation utility $y$. Thus, to ensure the agent's participation,  
the principal has to promise that the value of the agent's goal is no smaller than $y$. That is, 
\begin{align}\label{contract0}
y(t) = v^C(x, t) \geq y.  
\end{align}
Actually, given the monotonicity of utility functions, we only need to consider $ y(t) = y$. 

As elaborated in Section \ref{sec agent}, after the principal selects a reward function $C$, the agent chooses the optimal effort ${\Lambda}^C$, which generates the corresponding output process $\{x^C(s)\}_{t\leq s \leq T}$. The principal's objective function is
\begin{align}
\mathbb{E}\bigg[U_P\Big(x^C(T)-C\left(x^C(T)\right)\Big)\bigg]. \label{equ prin target}
\end{align}
where $U_P$ is the principal's utility function which satisfies the following Assumption \ref{assumption Up}. The argument of the utility function is the net value of the project after contract payoff.  
\begin{assumption}\label{assumption Up}
Principal's utility function $U_P(x)\in C^{{7}} (\mathbb{R})$ satisfies $U'_P(x) > 0 \geq \max \{U_P(x), U''_P(x) \}$, $\forall x \in \mathbb{R}. $ Additionally, we require 
\begin{align}
&{\frac{1}{c_0}  e^{-{c_0}{|x|}}  } \leq |U_P(x)| \leq c_0 ( 1 + e^{c_0{|x|}} ), 
\end{align}
for some constant $c_0 >0$
and
\begin{align}\label{equ UP derivartive}
-c_P \leq \frac{U_P'}{U_P} \leq -\frac{1}{c_P}, \quad 
\left|\frac{U_P^{(i)}}{U_P}\right| \leq c_P, \ 2\leq i \leq 7,
\end{align}
for $c_P >1$. 
\end{assumption}

Noticing that $y^C(T)=v^C\left(x^C(T), T\right)=U_A\left(C\left(x^C(T)\right)\right)$, it follows that
\begin{align}\label{equ uC}
&{C}\left(x^C(T)\right)=U_A^{-1}\left(y^C(T)\right),  
\end{align}
and thus noticing \eqref{contract0} and \eqref{equ prin target},  we define
$$u^C\left(x, y, t\right):=\mathbb{E}\bigg[U_P\big(x^C(T)-U_A^{-1}\left(y^C(T)\right)\big)\bigg| x^C(t) = x, y^C(t) = y\bigg]
$$
as the principal's value function under contract $C$. 

Define the set of contracts $\mathcal{C}(x, y, t)$ as those contracts which satisfy Assumption \ref{ass C} and \eqref{contract0} such that the above SDEs \eqref{dyn_x_c} and \eqref{dyn_y_c} both admit a strong solution in $[t, T]$ for any $x, y\in \mathbb{R}$. 
The principal's value function is defined as
$$u\left(x, y, t\right) : = \max \limits_{C\in \mathcal{C}(x, y, t)} u^C\left(x, y, t\right).$$ 

\section{Reduction to a standard stochastic control problem}\label{sec auxiliary}
The abovementioned principal-agent problem involves the optimization over a function $C(x)$, which is generally difficult. In the seminal work Cvitani{\'c} et al. \cite{cvitanic2018dynamic} and Cvitani{\'c} et al. \cite{cvitanic2009optimal}, this Stackelberg game problem is reformulated as a standard stochastic control problem.
In this section, for the paper's completeness, we introduce this stochastic control problem, and establish its equivalence to the original problem, subsequently deriving the candidate optimal contract.

Inspired by SDEs \eqref{dyn_x_c} and \eqref{dyn_y_c}, we let $\Lambda = \left\{{\lambda}(s)\right\}_{t\leq s \leq T}$ be a process such that {$\lambda(s) \in [\frac{1}{c_\lambda},c_\lambda]$ for some $c_\lambda > 1$}\footnote{We impose this bound on $\lambda$ by noticing \eqref{equ bd lambda}.} and the following system of SDEs admits a strong solution $\left\{ \left(x^{\lambda}(s), y^{\lambda}(s)\right)\right\}_{t \leq s \leq T}$ on $[t, T]$:
\begin{equation}
\begin{cases}
d x^{\lambda}(s)=\sigma {\lambda}(s) d s+\sigma d \mathcal{B}(s), \\
d y^{\lambda}(s)=\frac{c}{2} {\lambda}^2(s) d s+c {\lambda}(s) d \mathcal{B}(s), \\
x^{\lambda}(t)=x, \quad y^{\lambda}(t)=y. 
\end{cases}
\label{auxilliary}
\end{equation}
We denote the set of admissible controls by $\mathcal{A}(x, y, t)$ and define the value function
$$
w(x, y, t)=\sup _{\Lambda\in \mathcal{A}(x, y, t)} \mathbb{E}\left[U_P\big(x^{\lambda}(T)-U_A^{-1}(y^{\lambda}(T))\big)\right].
$$

Formally,
$w$ is a viscosity solution to the following PDE problem subject to an exponential growth condition 
\begin{align}\label{equ w visco}
\begin{cases}
w_t + \sup \limits_{{\lambda \in [\frac{1}{c_\lambda},c_\lambda]}} \bigg\{ \frac{\sigma^2}{2} w_{xx} + \sigma c {\lambda} w_{xy} + \frac{c^2 {\lambda}^2}{2} w_{yy} +  \sigma {\lambda} w_x + \frac{c}{2} {\lambda}^2 w_y \bigg\} = 0, \\
\qquad\qquad\qquad\qquad\qquad\qquad\qquad\qquad\qquad \forall x\in \mathbb{R},\ y\in \mathbb{R}, \ t \in [0, T), \ \\
w(x, y, T) = U_P \big( x - U_A^{-1}(y) \big),  \quad \forall x\in \mathbb{R}, y\in \mathbb{R}, \\
\max\limits_{t \in [0, T], x, y \in \mathbb{R}} \frac{w(x, y, t)}{e^{D (|x|+|y|) }} < \infty, \ \text{for some constant } D>0.
\end{cases}
\end{align} 
Assuming $w\in C^{2, 1}(\mathbb{R}^2 \times [0, T] )$, we define the function  
\begin{align}\label{equ defi L}
L(x, y, t):=w(x, y, t)+cw_y(x, y, t).
\end{align}
If 
\begin{align}\label{equ con L partial}
L_y<0,\  L_x>0 \text{ and } {\frac{L_x}{L_y} \in [-\frac{c}{\sigma}c_\lambda, -\frac{c}{\sigma}\frac{1}{c_\lambda}]}, 
\end{align}
then we can optimize the PDE operator in \eqref{equ w visco} to be 
\begin{equation}
   0 =  w_t + \frac{\sigma^2}{2} w_{xx} - \frac{\sigma^2}{2c} 
    \frac{\left(w_x + cw_{yx} \right)^2}{w_y + cw_{yy}} = w_t + \frac{\sigma^2}{2} w_{xx} -  \frac{\sigma^2}{2c}  \frac{L^2_x}{L_y} , \quad \forall (x, y) \in \mathbb{R}^2, \ t \in [0, T], \
\label{operator_w}
\end{equation}
and the optimal effort level is given by
\begin{equation} \label{def alpha_star}
{\lambda}^*(x, y, t) = \frac{-\sigma c w_{xy} - \sigma w_x}{c^2w_{yy} + cw_y} = -\frac{\sigma}{c} \frac{L_x(x, y, t)}{L_y(x, y, t)} \in [\frac{1}{c_\lambda},c_\lambda].
\end{equation}
The corresponding processes in \eqref{auxilliary} under the above optimal control are denoted by $\left\{ x^*(s), y^*(s) \right\}_{t\leq s \leq T}$, and the optimal strategy
$$
\lambda^*(s) = {\lambda}^*(x^*(s), y^*(s), s),\ t\leq s \leq T.
$$

Cvitani{\'c} et al. \cite{cvitanic2018dynamic} and Cvitani{\'c} et al. \cite{cvitanic2009optimal} have the following observation. 
\begin{proposition}\label{prop L}
Suppose $w\in C^{3, 2}(\mathbb{R}^2 \times [0, T])$\footnote{That is, $\partial^i_x \partial^j_y w$ and $\partial^k_x \partial^m_y \partial^n_t w$ are continuous for any $i, j, k, m, n \geq 0$ and $i+j \leq 3$, $k+m+n \leq 2$.},  $\{\lambda^*(s)\}_{t\leq s \leq T} \in \mathcal{A}(x, y, t)$, and \eqref{equ con L partial} holds. Then, along the path $\left\{ \left(x^*(s), y^*(s)\right) \right\}_{t\leq s \leq T}$, the function $L$
 is a constant. That is, 
\[
L\left(x^*(s), y^*(s), s\right) = L(x, y, t), \quad \forall s \in [t, T].
\]
Moreover, at time $T$ we have 
\begin{align}
L(x, y, T)&=w(x, y, T)+cw_y(x, y, T)\\
&= U_P\left(x-U_A^{-1}(y)\right)-c U_P^{\prime}\left(x-U_A^{-1}(y)\right) (U_A^{-1})^{\prime}(y).
\label{eq L^T}
\end{align}
\end{proposition}
\begin{remark}\label{rmk 2d sur}
It follows from this proposition that $\left(x^*(s), y^*(s), s\right)$ evolves on a two-dimensional surface, which is characterized by $L(x, y, s) = const$.  
Once the equivalence between the control problem and the principal-agent problem is established, we see that $y^*(s) = v^{C^*}(x^*(s), s)$ under the optimal contract $C^*$, which exactly corresponds to this surface. 
\begin{figure}[H]
        \centering
        \includegraphics[width=0.5\textwidth]{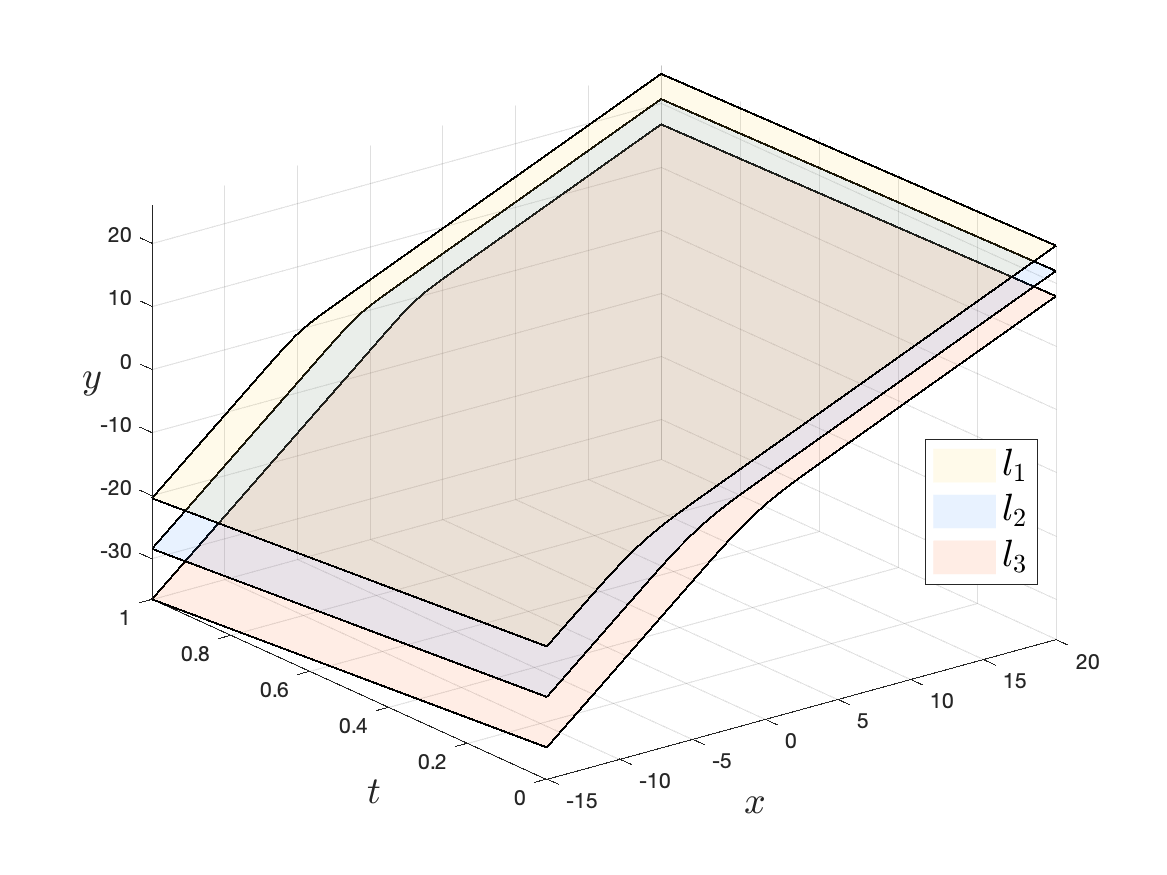}
        \caption{Two-dimensional surfaces $\mathcal{S}(l)$.}
       \label{2_d_surface}
\end{figure}
Denote $\mathcal{S}(l): = \bigg\{(x, y, s) \in \mathbb{R}^2 \times [0, T]\bigg| L(x, y, s) = l  \bigg\}$, then we see $\cup_{l \in \mathbb{R}} \mathcal{S}(l) = \mathbb{R}^2 \times [0, T]$ and $\mathcal{S}(l_1)\cap \mathcal{S}(l_2) = \emptyset$, $\forall l_1\neq l_2$. Moreover, due to  Proposition \ref{prop L}, $\mathcal{S}(l)$ is non-empty if and only if $l$ is in the image of function $L(x, y, T)$, which is an open set and we denote it as $\mathcal{L} = (-\infty, 0)$. 
\end{remark}

\subsection{Equivalence between the two problems and the optimal contract} \label{sec equi uw}
Suppose the principal-agent problem and the stochastic control problem are equivalent, then given $x(t) = x, y(t) = y$, we can derive $l = L(x, y, t)$. It follows from Proposition \ref{prop L} that
\begin{align}
l = L\left(x^*(T), y^*(T), T\right).
\end{align}
Since the two problems are equivalent, intuitively we have $$x^*(T) = x^{C^*}(T),\ y^*(T)=y^{C^*}(T) $$
for the optimal contract $C^*$, and thus $$y^*(T) = y^{C^*}(T) = U_A\Big(C^*(x^{C^*}(T))\Big) =U_A\Big(C^*(x^*(T))\Big). $$
Thus, for the constant $l= L(x, y, t)$, the optimal contract $C^*(x; l)$ is implicitly derived by 
\begin{align}
l = & L\bigg(x, U_A\Big(C^*\big(x; l\big)\Big), T\bigg)
= U_P\big(x-C^*(x; l)\big)-c \frac{U_P^{\prime}\big(x-C^*(x; l)\big)}{U_A^{\prime} ( C^*(x; l) )}. \label{equ C*0}
\end{align}

In the following theorem, we derive two results. 
On the one hand, we establish the equivalence between the principal-agent problem in Section \ref{sec model} and the stochastic control problem in Section \ref{sec auxiliary}. Cvitani{\'c} et al. \cite{cvitanic2018dynamic} also prove a similar result in the language of stochastic processes. In contrast, our proof is based on PDE. On the other hand, we also prove that the contract derived in \eqref{equ C*0} is indeed optimal. 
\begin{theorem}\label{thm equi}
Suppose $w(x, y, t) \in C^{3, 2} (\mathbb{R}^2 \times [0, T])$ and \eqref{equ con L partial},
then we have the following results. \\
(i). $u(x, y, t) = w(x, y, t)$, $\forall (x, y, t)\in \mathbb{R}^2 \times [0, T]$.\\
(ii). For any $x \in \mathbb{R}, l \in \mathcal{L}$, there is a unique solution $C^*(x; l)$ to  \eqref{equ C*0} and 
it is the optimal contract. 
\end{theorem}

The following theorem verifies the regularity assumption in Theorem \ref{thm equi}, and it constitutes the main contribution of our paper.
Precisely, we step-by-step construct a $C^{5, 2}$ solution to \eqref{operator_w} subject to the corresponding boundary and growth conditions. After that, we establish a comparison principle for the PDE, ensuring that the constructed solution is indeed the value function. 
\begin{theorem}\label{theorem regu w}
The function $w(x, y, t) \in C^{5,2}(\mathbb{R}^2\times [0, T])$ and  
\begin{align}\label{equ con Lx Ly}
    L_y(x, y, t)<0<L_x(x, y, t),\  {\frac{L_x}{L_y} \in [-\frac{c}{\sigma}c_\lambda, -\frac{c}{\sigma}\frac{1}{c_\lambda}]},\ \forall x, y \in \mathbb{R},\ t \in [0, T].
\end{align}
Moreover, $w(x, y, t)$ is the unique classical solution to the PDE \eqref{equ w visco}.
\end{theorem}

\section{Proof of Theorem \ref{theorem regu w} }\label{sec regu w}
{\bf Step 1}: Motivated by Remark \ref{rmk 2d sur}, we recover the value function on each surface defined by $L(x, y, t) = l$. 
Precisely, we show that the solution $\hat{Y}(x, l, t)$ to the following PDE exists in the space {$C^{{6,3}} (\mathbb{R}\times \mathcal{L}\times[0, T])$ with at most linear growth in $x$ for each fixed $l$.}
\begin{align}
\begin{cases}
    \hat{Y}_t + \frac{\sigma^2}{2} \hat{Y}_{xx} + \frac{\sigma^2}{2c} \hat{Y}_x^2 = 0, & \forall (x, l, t) \in \mathbb{R}\times \mathcal{L} \times [0, T), \\
    \hat{Y}(x, l, T) = \hat{Y}^T(x, l), & \forall x \in \mathbb{R}, l \in \mathcal{L}.
\end{cases}
\label{equ hat Y}
\end{align}
Here,  $\hat{Y}^T(x, l)$ is defined implicitly by
\begin{align}
l=L(x, \hat{Y}^T, T).
\label{equ term hatY}
\end{align}
{This PDE problem is closely related to \eqref{pde_vc} by noticing that $\hat{Y}^T(x, l)$ is nothing but $U_A\Big(C^*(x; l)\Big)$. That is, we select the candidate optimal contract, and $\hat{Y}$ is the agent's value function. 
}

\vspace{8pt}
{\bf Step 1.1}. We prove that the terminal condition $\hat{Y}^T(x, l)$ is well-defined and is in $C^6(\mathbb{R}\times \mathcal{L})$. 

\begin{lemma}\label{lem mono L}
We have $L_y(x, y, T) < 0 < L_x(x, y, T)$, $\forall x\in \mathbb{R},\  y\in \mathbb{R}$.
\end{lemma}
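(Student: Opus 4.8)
The plan is to compute $L_x$ and $L_y$ at time $T$ directly from the explicit formula $L(x,y,T) = U_P\big(x-U_A^{-1}(y)\big) - c\,U_P'\big(x-U_A^{-1}(y)\big)\,(U_A^{-1})'(y)$ and then read off the signs using Assumption \ref{assumption}. Throughout write $z := x - U_A^{-1}(y)$ for the principal's net consumption argument, and recall the elementary identity $(U_A^{-1})'(y) = 1/U_A'\big(U_A^{-1}(y)\big) > 0$ together with $(U_A^{-1})''(y) = -U_A''\big(U_A^{-1}(y)\big)/\big(U_A'(U_A^{-1}(y))\big)^3 \geq 0$, both of which follow from $U_A' > 0 \geq U_A''$.

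First I would handle $L_x$. Differentiating in $x$ (which only enters through $z$, since $\partial z/\partial x = 1$) gives
\begin{align}
L_x(x,y,T) = U_P'(z) - c\,U_P''(z)\,(U_A^{-1})'(y).
\end{align}
Since $U_P' > 0$, $U_P'' \leq 0$, and $(U_A^{-1})'(y) > 0$, both terms are nonnegative and the first is strictly positive, so $L_x(x,y,T) > 0$. This step is immediate.

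Next I would handle $L_y$, which is the more delicate sign. Here $y$ enters both through $z$ (with $\partial z/\partial y = -(U_A^{-1})'(y)$) and through the explicit factor $(U_A^{-1})'(y)$. Carrying out the differentiation,
\begin{align}
L_y(x,y,T) = -U_P'(z)(U_A^{-1})'(y) + c\,U_P''(z)\big((U_A^{-1})'(y)\big)^2 - c\,U_P'(z)(U_A^{-1})''(y).
\end{align}
Now every term is manifestly $\leq 0$: the first because $U_P'>0$ and $(U_A^{-1})'>0$; the second because $U_P''\leq 0$; the third because $U_P'>0$ and $(U_A^{-1})''\geq 0$. Moreover the first term is strictly negative since $U_P'(z) > 0$ and $(U_A^{-1})'(y) > 0$ everywhere, so $L_y(x,y,T) < 0$. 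This completes the proof.

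The only place any care is needed is getting the three terms of $L_y$ with the correct signs (in particular not dropping the $-c\,U_P'(z)(U_A^{-1})''(y)$ contribution coming from differentiating the $(U_A^{-1})'(y)$ factor), and invoking $(U_A^{-1})'' \geq 0$ which is where the concavity hypothesis $U_A'' \leq 0$ enters; there is no genuine obstacle, since the strict sign on $L_x$ and $L_y$ is already delivered by the strictly positive leading term $U_P'(z)$ together with the two-sided bound $U_P' > 0$ from Assumption \ref{assumption}.
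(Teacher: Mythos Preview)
Your proposal is correct and follows essentially the same route as the paper: direct differentiation of the explicit terminal formula for $L(x,y,T)$ and a term-by-term sign check using $U_P'>0\ge U_P''$, $(U_A^{-1})'>0$, and $(U_A^{-1})''\ge 0$. In fact your write-up is slightly cleaner than the paper's, which contains a sign slip in the sentence ``Note that $(U_A^{-1})'>0$, $(U_A^{-1})''<0$'' even though its own computation two lines earlier correctly gives $z''=-U_A''/ (U_A')^3>0$; your version states the convexity of $U_A^{-1}$ correctly and uses it consistently.
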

\begin{proof}[Proof of Lemma \ref{lem mono L}]
From \eqref{eq L^T}, we can directly calculate
\begin{align*}
L_y(x, y, T) &= - U_P' \big(x - U_A^{-1}(y)\big) \frac{d}{dy} U_A^{-1}(y) + c U_P''(x - U_A^{-1}(y)) \left(\frac{d}{dy} U_A^{-1}(y)\right)^2 \\
&\quad - c U_P'\big(x - U_A^{-1}(y)\big) \frac{d^2}{dy^2} U_A^{-1}(y).  
\end{align*}
We denote $ z = U_A^{-1}(y)$, then direct calculation shows 
\begin{align*}
\frac{d}{dy} U_A^{-1}(y)  = \frac{1}{U_A'(z)}>0, \quad \frac{d^2}{dy^2} U_A^{-1}(y)= -\frac{U_A''(z)}{U_A'(z)} \Big(\frac{d}{dy} U_A^{-1}(y)\Big)^2 = -\frac{U_A''(z)}{U_A'(z)^3}\geq0.
\end{align*}
Assumption \ref{assumption Ua} implies $U_A' > 0,\ U_A'' \leq 0$. We hence obtain $L_y(x, y, T) < 0$. Similarly, we have
\begin{align*}
L_x(x, y, T) = U'_P\left(x-U_A^{-1}(y)\right)-c U_P^{\prime\prime}\left(x-U_A^{-1}(y)\right) \frac{d}{dy}U_A^{-1}(y) >0.
\end{align*}
\end{proof}

The following lemma proves the existence and characterizes the properties of the function $\hat Y^T$. 
\begin{lemma}\label{lem term Y}
For any $(x, l)\in \mathbb{R}\times \mathcal{L}$, there is a unique $\hat{Y}^T(x, l)$ such that  \eqref{equ term hatY} holds.  Moreover, we have the following properties about $\hat Y^T$.\\
(i). $\hat{Y}^T(x, l) \in C^{{6}}(\mathbb{R}\times \mathcal{L})$ and there exists a $D(l)>0$, such that
$|\hat{Y}^T(x, l)|\leq D(l)|x| + D(l)$ and $D(l)$ is locally uniformly bounded for $l \in \mathcal{L}$. \\
(ii). $U_P\left(x-U_A^{-1}(\hat{Y}^T(x, l))\right)\sim l<0$. \footnote{That is, there exists a constant $C>1$, independent of $(x, l)$, such that $\frac{l}{U_P\left(x-U_A^{-1}(\hat{Y}^T(x, l))\right)}\in [1/C, C]$.}\\
(iii). {$\hat{Y}_x^T(x, l) \in [1/C, C]$ and $\hat{Y}_{xx}^T(x, l) \in [-C, C]$ for some global constant $C>0$ independent of $(x, l)$.  
Moreover, $\hat{Y}_l^T(x,l) \sim \frac{1}{l}$.}\\
(iv). There exist a finite constant $d > 0$ and a polynomial function $g(\cdot)$ of finite degree, such that
for all integers $i,j \geq 0$ and $1\leq i+j \leq 6$, $\left|\frac{\partial^{i+j} \hat{Y}^T(x, l)}{\partial x^i \partial l^j}\right| \leq \frac{g(|l|)}{|l|^d}$. \\
\end{lemma}
\begin{proof}[Proof of Lemma \ref{lem term Y}]
We will only prove the existence and property $(i)$ here. Proofs of the other properties are relegated to the appendix. 

Since $l \in \mathcal{L}$, there exists $(x_0, y_0)$ such that $L(x_0, y_0, T) = l$. We then consider general $x$ and  solve for 
 \begin{align}
  l =& L(x, \hat{Y}^T(x,l), T) \\
  =& U_P\left(x-U_A^{-1}(\hat{Y}^T)\right)-c U_P^{\prime}\left(x-U_A^{-1}(\hat{Y}^T)\right) (U_A^{-1})^{\prime}(\hat{Y}^T)\\
  = &  U_P\left(x-U_A^{-1}(\hat{Y}^T)\right)-c U_P^{\prime}\left(x-U_A^{-1}(\hat{Y}^T)\right) \frac{1}{U_A'(U_A^{-1}(\hat{Y}^T))}.
  \end{align}
  
  Given $L_y<0<L_x$ from Lemma \ref{lem mono L}, 
 it follows that for any $x> x_0$, $L(x, y_0, T) > L(x_0, y_0, T) = l$. Furthermore, given that $U'_A>0\geq U''_A$, we have 
 \begin{align}
      & L\bigg(x, U_A\big( U_A^{-1}(y_0)+(x-x_0)\big), T\bigg)\\
  = &U_P\left(x_0-U_A^{-1}(y_0)\right)-c U_P^{\prime}\left(x_0-U_A^{-1}(y_0)\right) \frac{1}{U_A'\bigg(U_A^{-1}(y_0) +(x-x_0)\bigg)}\\
  \leq & U_P\left(x_0-U_A^{-1}(y_0)\right)-c U_P^{\prime}\left(x_0-U_A^{-1}(y_0)\right) \frac{1}{U_A'\bigg(U_A^{-1}(y_0) \bigg)}\\
  = &l. 
 \end{align}
 Therefore, for any $x>x_0$, there exists a unique solution
 \begin{align}\label{equ v* bdu}
 \hat{Y}^T(x,l)\in \bigg(y_0, U_A\big( U_A^{-1}(y_0)+(x-x_0)\big)\bigg].
 \end{align}
Analogously, for any $x<x_0$,  there exists a unique solution 
  \begin{align}\label{equ v* bdl}
  \hat{Y}^T(x,l)\in \bigg[U_A\big( U_A^{-1}(y_0)-(x_0-x)\big), y_0\bigg).
  \end{align}
Consequently, existence, uniqueness and the growth condition are guaranteed. {Moreover, the growth condition is {locally} uniform for $l\in\mathcal{L}$.}
By Implicit Function Theorem and Lemma \ref{lem mono L}, it follows that $Y^T(x, l)\in {C^{6}(\mathbb{R}\times \mathcal{L})}$ provided that $U_A, U_P\in {C^{7}(\mathbb{R})}$.
\end{proof}

\vspace{8pt}

{\bf Step 1.2}. We show that the solution $\hat{Y}(x, l, t)$ exists and belongs to {$C^{6,3} (\mathbb{R} \times \mathcal{L} \times[0, T]).$}

Set $\hat{m}(x, l, t)=e^{\frac{1}{c} \hat{Y}(x, l, t)}$,
then \eqref{equ hat Y} is equivalent to 
\begin{equation} \label{equ mhat}
    \begin{cases}
        \hat{m}_t+\frac{\sigma^2}{2} \hat{m}_{x x}=0 , &\text{for } (x, l, t) \in \mathbb{R}\times \mathcal{L} \times [0, T),\\
        \hat{m}(x, l, T)=e^{\frac{1}{c} \hat{Y}^T(x, l)}, & \text{for } (x, l) \in \mathbb{R}\times \mathcal{L}.
    \end{cases}
\end{equation}

To complete this step, we need to guarantee the growth condition of the terminal value function $\hat{m}(x, l, T)$ and the derivatives $$\left|\frac{\partial^{i+j} \hat{m}(x, l, T)}{\partial x^i \partial l^j}\right|, \text{ for $i+j \leq 6$}.$$
The above {Lemma \ref{lem term Y} (i)} implies the exponential growth of the terminal value function $\hat{m}(x, l, T)$ in $x$ for any fixed $l$. {Lemma \ref{lem term Y} (iii) and (iv)} further implies that $\left|\frac{\partial^{i+j} \hat{m}(x, l, T)}{\partial x^i \partial l^j}\right|$ is bounded by $C(l) \hat{m}(x, l, T)$, where $C(l)$ is a constant continuously depending on $l$. 

Therefore, for any fixed $l$, the solution $\hat m$ to \eqref{equ mhat} exists. For partial derivatives $f(x, l, t) : =  \frac{\partial^{i+j} \hat{m}(x, l, t)}{\partial x^i \partial l^j}$, we see from \eqref{equ mhat} that it satisfies 
\begin{equation} 
    \begin{cases}
        f_t+\frac{\sigma^2}{2} f_{x x}=0 , &\text{for } (x, l, t) \in \mathbb{R}\times \mathcal{L} \times [0, T),\\
        f(x, l, T)=\frac{\partial^{i+j} \hat{m}(x, l, t)}{\partial x^i \partial l^j}, & \text{for } (x, l) \in \mathbb{R}\times \mathcal{L}.
    \end{cases}
\end{equation}
From the growth condition and Lemma \ref{lemma_conts_y}, this function $f$
is continuous. 

\vspace{8pt}
{\bf Step 2:} We show that for any $(x, y, t) \in \mathbb{R}^2 \times [0, T]$, there exists a unique $\hat{L}(x, y, t)\in {C^{{6,3}}(\mathbb{R}^2 \times [0, T])}$ such that
\begin{align}\label{equ defi hatL}
y=\hat{Y}(x, \hat{L}(x, y, t), t).
\end{align}
Moreover, 
\begin{align}\label{equ LxLy sign}
    \hat{L}_y <0 < \hat{L}_x, \forall (x, y, t) \in\mathbb{R}^2 \times [0, T].
\end{align}
 
This $\hat{L}$ function entails the information of the surfaces in Figure \ref{2_d_surface}
and it will be used to construct a classical solution in Step 4.

{First, Lemma \ref{lem mono L} implies $\hat{Y}^T(x, l)$ is strictly decreasing in $l$. Therefore, 
\begin{align}\label{equ hatyl}
\hat{Y}_l(x, l, t)<0,\ \forall (x, t) \in \mathbb{R}\times [0, T]
\end{align} 
by the comparison principle for the parabolic PDE \eqref{equ mhat}.
}

Next, 
we take $\partial_x$ on both sides of
$$ l = L(x, \hat{Y}^T(x, l), T) $$
and derive that 
$$ 0 = L_x + L_y \hat{Y}^T_x. $$
From Lemma \ref{lem mono L} we derive $\hat{Y}^T_x(x, l)>0$. Then by the comparison principle for \eqref{equ mhat}, we obtain
\begin{align}\label{equ hatyx}
\hat{Y}_x(x, l, t)>0,\ \forall (x, t) \in \mathbb{R}\times [0, T].
\end{align}

Thirdly, we show the existence and regularity of $\hat{L}$. From \eqref{equ hatyl} and the definition, it follows that 
$\lim\limits_{l\to \inf \mathcal{L}} \hat{Y}^T(x, l) = +\infty$ for any fixed $x$. Therefore, given \eqref{equ mhat}, $\lim\limits_{l\to \inf \mathcal{L}}\hat{Y}(x, l, t) = +\infty$ for any fixed $(x, t)$. 
Similarly we have $\lim\limits_{l\to \sup \mathcal{L}}\hat{Y}(x, l, t) = -\infty$ for any fixed $(x, t)$.
Existence is thus ensured by the Intermediate Value Theorem, and the $C^{{6,3}}$ regularity follows from the Implicit Function Theorem.  

We finally show that $\hat{L}_y <0 < \hat{L}_x$. On the one hand, since $\hat{Y}_l<0$, it follows from \eqref{equ defi hatL} that $\hat{L}_y <0$. On the other hand, taking the partial derivative $\partial_x$ in \eqref{equ defi hatL}, we obtain
$$ 0 = \hat{Y}_x+\hat{Y}_l \hat{L}_x. $$
In view of \eqref{equ hatyl} and \eqref{equ hatyx}, we conclude that $\hat{L}_x>0. $

\vspace{8pt}
{\bf Step 3: } 
Since $\hat{Y}$ satisfies the PDE \eqref{equ hat Y}, $\hat{L}(x, y, t)$ will correspondingly satisfy a PDE. 
We derive this PDE problem for later use in Step 4 to verify the constructed function is a solution to the PDE \eqref{equ w visco}.

For any $(x, l, t) \in \mathbb{R}\times\mathcal{L} \times [0, T)$, we compute the following partial derivatives:
\begin{align*}
& \frac{\partial}{\partial x} \eqref{equ defi hatL} \Rightarrow \quad 0=\hat{Y}_x+\hat{Y}_l \hat{L}_x, \\
& \frac{\partial}{\partial t}  \eqref{equ defi hatL}\Rightarrow \quad 0=\hat{Y}_t+\hat{Y}_l \hat{L}_t, \\
& \frac{\partial}{\partial y}  \eqref{equ defi hatL} \Rightarrow \quad 1=\hat{Y}_l \hat{L}_y, \\
& \frac{\partial^2}{\partial x^2} \eqref{equ defi hatL}  \Rightarrow \quad 0=\hat{Y}_{x x}+2 \hat{Y}_{x l} \hat{L}_x+\hat{Y}_{l l} \hat{L}_x^2+\hat{Y}_l \hat{L}_{x x}, \\
& \frac{\partial^2}{\partial y^2}   \eqref{equ defi hatL}\Rightarrow \quad 0=\hat{Y}_l \hat{L}_{y y}+\hat{Y}_{l l} \hat{L}_y^2, \\
& \frac{\partial^2}{\partial x \partial y} \eqref{equ defi hatL} \Rightarrow \quad 0=\hat{Y}_{x l} \hat{L}_y+\hat{Y}_{l l} \hat{L}_x \hat{L}_y+\hat{Y}_l \hat{L}_{x y}.
\end{align*}

From above, we obtain:
\begin{align} 
	\hat{Y}_l & =\frac{1}{\hat{L}_y}, \quad \hat{Y}_t=-\frac{\hat{L}_t}{\hat{L}_y}, \quad \hat{Y}_x=-\frac{\hat{L}_x}{\hat{L}_y}, \label{deri_YtoL}\\ \hat{Y}_{l l} & =-\frac{\hat{L}_{y y}}{\hat{L}_y^3}, \\ 
	\hat{Y}_{x l} & =-\hat{L}_x \hat{Y}_{l l}-\frac{\hat{Y}_l}{\hat{L}_y} \hat{L}_{x y}=\frac{\hat{L}_x}{\hat{L}_y^3} \hat{L}_{y y}-\frac{\hat{L}_{x y}}{\hat{L}_y^2}, \\ 
	\hat{Y}_{x x} & =-2 \hat{Y}_{x l} \hat{L}_x-\hat{Y}_{l l} \hat{L}_x^2-\hat{Y}_l \hat{L}_{x x} \\ 
	& =2 \hat{L}_x\left(-\frac{\hat{L}_x}{\hat{L}_y^3} \hat{L}_{y y}+\frac{1}{\hat{L}_y^2} \hat{L}_{x y}\right)+\frac{\hat{L}_x^2}{\hat{L}_{y}^3} \hat{L}_{y y}-\frac{1}{\hat{L}_y} \hat{L}_{x x} \\ 
	& =\frac{1}{\hat{L}_y}\left(-\hat{L}_{x x}-\frac{\hat{L}_x^2}{\hat{L}_y^2} \hat{L}_{y y}+2 \frac{\hat{L}_x}{\hat{L}_y} \hat{L}_{x y}\right).\end{align}
Thus, \eqref{equ hat Y} becomes
\begin{align}
0 = \hat{Y}_t+\frac{\sigma^2}{2} \hat{Y}_{x x}+\frac{\sigma^2}{2 c} \hat{Y}_x^2 =\frac{1}{\hat{L}_y}\left[-\hat{L}_t-\frac{\sigma^2}{2}\left(\hat{L}_{x x}-2 \frac{\hat{L}_x}{\hat{L}_y} \hat{L}_{x y}+\frac{\hat{L}_x^2}{\hat{L}_y^2} \hat{L}_{y y}\right)+\frac{\sigma^2}{2c} \frac{\hat{L}_x^2}{\hat{L}_y}\right]. 
\end{align}
Since $\hat{L}_y<0$ from Step 2, 
we conclude that $\hat{L}$ satisfies
\begin{equation}\label{equ PDE hatL}
    \begin{cases}
        \hat{L}_t+\frac{\sigma^2}{2}\left(\hat{L}_{x x}-\frac{2 \hat{L}_x}{\hat{L}_ y} \hat{L}_{x y}+\frac{\hat{L}_x^2}{\hat{L}_ y^2} \hat{L}_{y y}\right)-\frac{\sigma^2}{2 c} \frac{\hat{L}_x^2}{\hat{L}_y}=0, \quad \forall (x, y, t) \in \mathbb{R}^2 \times[0, T), \\
        \hat{L}(x, y, T)=L(x, y, T),\quad \forall (x, y) \in \mathbb{R}^2,
    \end{cases}
\end{equation}
where $L(x, y, T)$ is given by \eqref{eq L^T}.

\vspace{8pt}
{\bf Step 4: }
Finally, given the function $\hat L$ derived in Step 2, we set $\hat{w}(x, y, t)$ as the solution to
\begin{equation}
\begin{cases} \label{PDE hatw}
\hat{w}_t+\frac{\sigma^2}{2} \hat{w}_{x x}=\frac{\sigma^2}{2 c} \frac{\hat{L}_x^2}{\hat{L}_y},\quad \text{for } (x, y, t) \in \mathbb{R}^2 \times[0, T), \\
\hat{w}(x, y, T)=U_P\left(x-U_A^{-1}(y)\right),\quad \text{for } (x, y) \in \mathbb{R}^2.
\end{cases}
\end{equation}

We show that this PDE problem admits a unique solution $\hat{w} \in C^{5, 2}(\mathbb{R}^2 \times [0, T])$ with at most exponential growth. Subsequently, we prove that $\hat{w}(x, y, t) = w(x, y, t)$, for $ (x, y, t) \in \mathbb{R}^2 \times[0, T]$ and that the function $L$ satisfies 
\begin{align}
L_y<0<L_x, \text{ and } \frac{L_x}{L_y} \in [-\frac{c}{\sigma}c_\lambda, -\frac{c}{\sigma}\frac{1}{c_\lambda}]. 
\end{align}
The main difficulty here is that while the PDE is solved in $(x, t)$, we require regularity in $(x, y, t)$. 

\vspace{8pt}
{\bf Step 4.1}. We prove the existence of a solution $\hat{w}\in C^{5, 2}(\mathbb{R}^2 \times [0, T])$ with at most exponential growth. The following lemma is essential for this target. 
\begin{lemma}\label{lemma grow L} 
For integers $i,j \geq 0$ and $1 \leq i+j \leq 5$,
    \begin{align*}
    & {\left| \hat{L} \right|},
    {\bigg|\frac{(\hat{L}_x)^2}{\hat{L}_y}\bigg|}, 
    {\left|\frac{\partial^{i+j} \left(\frac{(\hat{L}_x)^2}{\hat{L}_y}\right)}{\partial x^i \partial y^j}\right|}, 
    {\left|\frac{\partial \left(\frac{(\hat{L}_x)^2}{\hat{L}_y}\right)}{\partial t}\right|}, 
    {\left|\frac{\partial^2 \left(\frac{(\hat{L}_x)^2}{\hat{L}_y}\right)}{\partial t^2}\right|}, 
    {\left|\frac{\partial^2 \left(\frac{(\hat{L}_x)^2}{\hat{L}_y}\right)}{\partial t \partial x}\right|}, 
    {\left|\frac{\partial^2 \left(\frac{(\hat{L}_x)^2}{\hat{L}_y}\right)}{\partial t \partial y}\right|}\\
     &\leq c_2 ( 1 + e^{c_2 ({|x|+|y|})}), 
    \end{align*}
    for a constant $c_2 \in (0, \infty)$ which is independent of $(x, y, t)$. 
\end{lemma}
In what follows, we establish the existence of solution $\hat{w}(x, y, t) \in {C^{5,2}(\mathbb{R}^2 \times [0, T])}\subset {C^{3,2}(\mathbb{R}^2 \times [0, T])}$ from Lemma \ref{lemma grow L} and {Lemma \ref{lemma_conts_y}}.

On the one hand, Lemma \ref{lemma grow L} ensures that the function $\hat{w}$ is well-defined and satisfies
\begin{align} \label{growth_hat_w}
\max\limits_{t \in [0, T], x, y \in \mathbb{R}} \frac{\hat {w}(x, y, t)}{e^{D (|x|+|y|) }} < \infty, \ \text{for some constant } D>0.
\end{align}

On the other hand, the regularity is guaranteed by noticing that $\hat{L}(x, y, t)\in {C^{{6,3}} (\mathbb{R}^2 \times [0, T])}$ from Step 2. The function $f(x, y, t)$ in Lemma \ref{lemma_conts_y} can be set as partial derivatives of $\hat{w}$. 

Precisely, first, for any $0\leq i+j \leq 5$, we define 
\begin{align}
    f(x, y, t) &:= \frac{\partial^{i+j}}{\partial x^i \partial y^j}\hat{w}(x, y, t),\\
    H(x, y, t) &:=  \frac{\partial^{i+j}}{\partial x^i \partial y^j} \big(\frac{\hat{L}_x^2}{\hat{L}_y}\big)(x, y, t),\\
    U(x, y) &:= \frac{\partial^{i+j}}{\partial x^i \partial y^j} \big( U_P\left(x-U_A^{-1}(y)\right) \big).
\end{align}
Then $f$ satisfies \eqref{PDE_w&H}, it follows from Lemma \ref{lemma grow L} and Lemma \ref{lemma_conts_y} that $f$ is continuous.

Second, define $f = \hat{w}_t$. From \eqref{PDE hatw}, for $t< T$, we have 
\begin{align}
    f_t + \frac{\sigma^2}{2} f_{xx} = \partial_t \big(\frac{\sigma^2}{2c}\frac{\hat{L}_x^2}{\hat{L}_y} \big).
\end{align}
At terminal time $T$, \eqref{PDE hatw} yields
\begin{align}\label{equ wt term}
f(x, y, T) =\frac{\sigma^2}{2c} \frac{\hat{L}_x^2}{\hat{L}_y}(x, y, T) - \frac{\sigma^2}{2} \big( U_P\left(x-U_A^{-1}(y)\right) \big)_{xx}.  
\end{align}
By Lemma \ref{lemma grow L} and Lemma \ref{lemma_conts_y}, we conclude that $f$ is continuous. Similarly, these arguments work for $\hat{w}_{tx}$ and  $\hat{w}_{ty}$. 

Finally, define $f = \hat{w}_{tt}$, we have
\begin{align}
    f_t + \frac{\sigma^2}{2} f_{xx} = \partial_{tt} \big(\frac{\sigma^2}{2c}\frac{\hat{L}_x^2}{\hat{L}_y} \big),\ \forall t <T.
\end{align}
At $t = T$,  it follows from \eqref{PDE hatw} that
\begin{align}
   (\hat{w}_t)_t + \frac{\sigma^2}{2} (\hat{w}_t)_{xx} = \partial_{t} \big(\frac{\sigma^2}{2c}\frac{\hat{L}_x^2}{\hat{L}_y} \big). 
\end{align}
From \eqref{equ wt term}, we then derive 
\begin{align}
f(x, y, T) = &\partial_{t} \big(\frac{\sigma^2}{2c}\frac{\hat{L}_x^2}{\hat{L}_y} \big) - \frac{\sigma^2}{2} \big( \hat{w}_t \big)_{xx}\bigg|_{(x, y, T)}\\  
= & \partial_{t} \big(\frac{\sigma^2}{2c}\frac{\hat{L}_x^2}{\hat{L}_y} \big) - \frac{\sigma^2}{2} \big(\frac{\sigma^2}{2c} \frac{\hat{L}_x^2}{\hat{L}_y}\big)_{xx}\bigg|_{(x, y, T)} + \frac{\sigma^4}{4} \partial_x^4 \bigg( U_P\left(x-U_A^{-1}(y)\right) \bigg).  
\end{align}
Applying Lemma \ref{lemma grow L} and Lemma \ref{lemma_conts_y} once more, we conclude that $f$ is continuous. 

\vspace{8pt}
{\bf Step 4.2}. We show $\hat{w}(x, y, t) = w(x, y, t)$ and verify the condition \eqref{equ con Lx Ly}.  

\vspace{8pt}
(i). We first derive that $\hat w$ satisfies the following PDE 
\begin{equation}
\begin{cases}
    \hat{w}_t + \frac{\sigma^2}{2} \hat{w}_{xx} - \frac{\sigma^2}{2c} 
    \frac{\left(\hat{w}_x + c\hat{w}_{yx} \right)^2}{\hat{w}_y + c\hat{w}_{yy}} = 0, \quad \forall (x, y) \in \mathbb{R}^2, \ t \in [0, T], \\
    \hat{w}(x, y, T)=U_P\left(x-U_A^{-1}(y)\right),\quad \forall (x, y) \in \mathbb{R}^2.
\label{finalpdew}
\end{cases}
\end{equation}
Taking $\partial_y$ of \eqref{PDE hatw} on both sides, we obtain

\begin{align}\label{equ pde wy}
\left(\hat{w}_y\right)_t+\frac{\sigma^2}{2}\left(\hat{w}_y\right)_{x x}  =\frac{\sigma^2}{2 c} \frac{2 \hat{L}_{x} \hat{L}_{y} \hat{L}_{xy}-\hat{L}_x^2 \hat{L}_{y y}}{\hat{L}_y^2}.
\end{align}
Set $\tilde{L}(x, y, t)=\hat{w}(x, y, t)+c \hat{w}_y(x, y, t)$, {then $\tilde{L}(x, y, t)$ satisfies the growth condition \eqref{growth_hat_w} by noticing Lemma \ref{lemma grow L} and that both $\hat w$ and $\hat w_y$ satisfy PDE \eqref{PDE hatw}}. By adding up \eqref{PDE hatw} and $c\times$\eqref{equ pde wy}, it follows that
$$
\tilde{L}_t+\frac{\sigma^2}{2} \tilde{L}_{x x}=\frac{\sigma^2}{2 c} \frac{\hat{L}_x^2}{\hat{L}_y}+\frac{\sigma^2}{2} \frac{2 \hat{L}_x \hat{L}_y \hat{L}_{x y}-\hat{L}_x^2 \hat{L}_{y y}}{\hat{L}_y^2}=\hat{L}_t+\frac{\sigma^2}{2} \hat{L}_{x x},
$$
where the second equation is from \eqref{equ PDE hatL}. 
Consequently, we obtain 
\begin{equation}
(\hat{L}-\tilde{L})_t+\frac{\sigma^2}{2}(\hat{L}-\tilde{L})_{xx}=0, \quad \forall(x, y, t) \in \mathbb{R}^2 \times [0, T). 
\end{equation}
Furthermore, at time $T$ we have $$\tilde{L}(x, y, T) = \hat{w}(x, y, T)+c \hat{w}_y(x, y, T) = \hat{L}(x, y, T). $$
Since both $\hat L$ and $\tilde L$ grow at most exponentially,
by the comparison principle, we conclude that $\tilde{L}(x, y, t) = \hat{L}(x, y, t)$. That is, 
\[
\quad \hat{w}(x, y, t) + c \hat{w}_y(x, y, t) = \hat{L}(x, y, t).
\]

Finally, by the definition of $\hat{w}(x, y, t)$ (equation \eqref{PDE hatw}), we see that it indeed satisfies the PDE \eqref{finalpdew}. 

\vspace{8pt}
(ii). We next verify that $\frac{\hat{L}_x}{\hat{L}_y} \in [-\frac{c}{\sigma}c_\lambda, -\frac{c}{\sigma}\frac{1}{c_\lambda}]$ for some $c_\lambda > 1$. 
On the one hand, \eqref{equ LxLy sign} indicates that $\frac{\hat{L}_x}{\hat{L}_y}<0$. On the other hand, by \eqref{deri_YtoL}, we have $\frac{\hat{L}_x(x, y, t)}{\hat{L}_y(x, y, t)} = - \hat{Y}_x(x,\hat{L}(x, y, t),t)$ where $\hat{Y}$ is the unique classical solution of PDE \eqref{equ hat Y}. Notice that Lemma \ref{lem term Y} already verifies that $\hat{Y}_x^T$ is globally uniformly bounded for any $(x,l) \in \mathbb{R} \times \mathcal{L}$. Taking $\partial_x$ of PDE \eqref{equ mhat}, we see that $\hat {m}_x$ and $\hat m$ satisfy the same PDE operator.
Additionally, recalling that $\hat{Y}_x(x,l,t) = c \frac{\hat{m}_x(x,l,t)}{\hat{m}(x,l,t)}$, and it is bounded below and above at $t= T$ by Lemma \ref{lem term Y} (iii). The comparison principle then verifies the {uniform boundedness} of $\hat{Y}_x$.

\vspace{8pt}
(iii). From (i) and (ii), we see 
$\hat{w} \in C^{{5,2}}(\mathbb{R}^2 \times [0, T])$ satisfies the same PDE problem \eqref{equ w visco} as $w(x, y, t)$. 
Consequently, $\hat w = w$ follows from the following comparison principle.
\begin{lemma} \label{CP}
    (Comparison principle) Let $U$(resp. $V$) be an upper-semicontinuous viscosity sub-solution (resp. lower-semicontinuous viscosity super-solution) to the PDE \eqref{equ w visco}, such that $U(x,y,T) \leq V(x,y,T)$ on $\mathbb{R}^2$. Then $U \leq V$ on $\mathbb{R}^2\times [0,T]$.
\end{lemma}
The proof of this comparison principle is relegated to the appendix.
It follows from $w(x, y, t)  = \hat{w}(x, y, t)$ that ${w}(x, y, t) \in C^{{5,2}} (\mathbb{R}^2 \times[0, T])$, and $L_y= \hat{L}_y <0< \hat{L}_x = L_x$, $\frac{\hat{L}_x}{\hat{L}_y} = \frac{L_x}{L_y} \in [-\frac{c}{\sigma}c_\lambda, -\frac{c}{\sigma}\frac{1}{c_\lambda}]$.

\section{Conclusion}\label{conclusion}
In this paper, we prove the existence of a classical solution to a degenerate, fully nonlinear parabolic partial differential equation arising from the principal-agent problem. By establishing this result, we fill the gap in the literature and guarantee the existence of the optimal contract. The current work is restricted to a quadratic cost function, the extension to more general cases is left for future research.

\bibliographystyle{plain}

\begin{thebibliography}{10}

\bibitem{cvitanic2018dynamic}
Jak{\v{s}}a Cvitani{\'c}, Dylan Possama{\"\i}, and Nizar Touzi.
\newblock Dynamic programming approach to principal--agent problems.
\newblock {\em Finance and Stochastics}, 22:1--37, 2018.

\bibitem{cvitanic2009optimal}
Jak{\v{s}}a Cvitani{\'c}, Xuhu Wan, and Jianfeng Zhang.
\newblock Optimal compensation with hidden action and lump-sum payment in a continuous-time model.
\newblock {\em Applied Mathematics and Optimization}, 59:99--146, 2009.

\bibitem{dolcetta2007qualitative}
I~Capuzzo Dolcetta and Antonio Vitolo.
\newblock A qualitative phragm{\'e}n--lindel{\"o}f theorem for fully nonlinear elliptic equations.
\newblock {\em Journal of Differential Equations}, 243(2):578--592, 2007.

\bibitem{ferretti2003uniqueness}
Elisa Ferretti.
\newblock Uniqueness in the cauchy problem for parabolic equations.
\newblock {\em Proceedings of the Edinburgh Mathematical Society}, 46(2):329--340, 2003.

\bibitem{lieberman1996second}
Gary~M Lieberman.
\newblock {\em Second order parabolic differential equations}.
\newblock World scientific, 1996.

\bibitem{pao2012nonlinear}
Chia-Ven Pao.
\newblock {\em Nonlinear parabolic and elliptic equations}.
\newblock Springer Science \& Business Media, 2012.

\bibitem{possamai2025there}
Dylan Possama{\"\i} and Nizar Touzi.
\newblock Is there a Golden Parachute in Sannikov’s principal--agent problem?.
\newblock {\em Mathematics of Operations Research}, 50(2):1173--1203, 2025.

\bibitem{sannikov2008continuous}
Yuliy Sannikov.
\newblock A continuous-time version of the principal-agent problem.
\newblock {\em The Review of Economic Studies}, 75(3):957--984, 2008.

\bibitem{spear1987repeated}
Stephen~E Spear and Sanjay Srivastava.
\newblock On repeated moral hazard with discounting.
\newblock {\em The Review of Economic Studies}, 54(4):599--617, 1987.

\bibitem{demarzo2006optimal}
Peter~M DeMarzo and Yuliy Sannikov.
\newblock Optimal security design and dynamic capital structure in a continuous-time agency model.
\newblock {\em The Journal of Finance}, 61(6):2681--2724, 2006.

\bibitem{krylov1987nonlinear}
Nikolai~V Krylov.
\newblock {\em Nonlinear elliptic and parabolic equations of the second order}.
\newblock Springer, 1987.


\bibitem{karatzas2014brownian}
Ioannis~Karatzas and Steven~Shreve.
\newblock {\em Brownian Motion and Stochastic Calculus}.
\newblock Springer, 2014.

\bibitem{krvsek2023randomisation}
Daniel Kr{\v{s}}ek and Dylan Possama{\"\i}.
\newblock Randomisation with moral hazard: a path to existence of optimal contracts.
\newblock {\em arXiv preprint arXiv:2311.13278}, 2023.

\bibitem{mastrolia2025agency}
Thibaut Mastrolia and Jiacheng Zhang.
\newblock Agency problem and mean field system of agents with moral hazard, synergistic effects and accidents.
\newblock {\em Journal of Optimization Theory and Applications}, 205(3):47, 2025.

\bibitem{hubert2023continuous}
Emma~Hubert.
\newblock Continuous-time incentives in hierarchies.
\newblock {\em Finance and Stochastics}, 27(3):605--661, 2023.

\bibitem{chiusolo2024new}
Alessandro~Chiusolo and Emma~Hubert.
\newblock A new approach to principal-agent problems with volatility control.
\newblock {\em arXiv preprint arXiv:2407.09471}, 2024.

\bibitem{hernandez2024time}
Camilo~Hern{\'a}ndez and Dylan~Possama{\"\i}.
\newblock Time-inconsistent contract theory.
\newblock {\em Mathematical Finance}, 34(3):1022--1085, 2024.

\bibitem{guan2023continuous}
Chonghu~Guan, Xiaomin~Shi, and Zuo Quan~Xu.
\newblock Continuous-time Markowitz's mean-variance model under different borrowing and saving rates.
\newblock {\em Journal of Optimization Theory and Applications}, 199(1):167--208, 2023.


\bibitem{wang1989regularity}
Lihe~Wang.
\newblock {\em On the regularity theory of fully nonlinear parabolic equations}.
\newblock New York University, 1989.


\end{thebibliography}

\clearpage

\appendix
\section*{Appendices}
\section{Proof of Proposition \ref{prop v C2}}
We consider the transformation $m(x, t): = \exp(v^C(x, t)/c)$. Then, \eqref{pde_vc} is equivalent to 
\begin{equation}\label{equ m0}
    \begin{cases}
        {m}_t+\frac{\sigma^2}{2} {m}_{x x}=0,\  \forall (x, t) \in \mathbb{R}\times[0, T),  \\
        {m}(x, T)=\exp\big(U_A(C(x))/c\big).\ \forall x \in \mathbb{R}.
    \end{cases}
\end{equation}
On the one hand, the existence and uniqueness of a classical solution to this PDE problem under the exponential growth condition can be derived from classical PDE theory (see 
\cite{ferretti2003uniqueness}, \cite{lieberman1996second},  \cite{pao2012nonlinear}). 
On the other hand, by the dynamic programming principle, $\exp(v^C(x, t)/c)$  should be a viscosity solution to the above PDE problem. By the comparison principle (see, e.g., \cite{dolcetta2007qualitative}), $\exp(v^C(x, t)/c)$ should be a classical solution to \eqref{equ m0}, which implies that $v^C(x, t)$ is a classical solution to \eqref{pde_vc} satisfying the growth condition \eqref{equ growth v}. 

The optimal effort follows from straightforward calculation. We complete the proof by showing that $\lambda^C$ and $\lambda^C_{x}$ are uniformly bounded. 
This result implies that
$\Lambda^C = \{\lambda^C(s)\}_{t \leq s \leq T} \in \mathcal{A}_\lambda(x,t)$ {(see \cite{karatzas2014brownian})}.

Indeed, $m_x(x,t)$ satisfying PDE \eqref{equ m0} with terminal condition 
\begin{align}
m_x(x,T) &= \frac{1}{c}\exp\big(U_A(C(x))/c\big) U_A'(C(x)) C'(x)\\ 
&\qquad \in [\frac{\exp\big(U_A(C(x))/c\big)}{c c_0c_1}, \frac{c_0 c_1 \exp\big(U_A(C(x))/c\big)}{c}].
\end{align}
Consequently, we have
$$
m_x(x,T) - \frac{m(x,T)}{c c_0 c_1} \geq 0, \quad \quad m_x(x,T) - \frac{c_0 c_1}{c} m(x,T) \leq 0.
$$
By the comparison principle, it follows that $\lambda^C(x,t) = \frac{\sigma}{c} v^C_x(x,t) = \sigma \frac{m_x(x,t)}{m(x,t)} \in [\frac{\sigma}{ c c_0 c_1},\frac{\sigma c_0 c_1}{c}]$, which is uniformly bounded.
{Following a similar argument, there exists a constant $c_m > 0$, {independent of $(x,t)$}, such that
$$
m_{xx}(x,T) + c_m m(x,T) \geq 0, \quad \quad m_{xx}(x,T) - c_m m(x,T) \leq 0.
$$
Thus, by comparison principle, $\lambda^C_x(x,t) = \frac{\sigma}{c} v^C_{xx}(x,t) = \sigma \frac{m_{xx}(x,t)}{m(x,t)} - \sigma \left(\frac{m_x(x,t)}{m(x,t)}\right)^2 \in [(-c_m-\frac{c_0^2c_1^2}{c^2})\sigma,(c_m - \frac{1}{c^2c_0^2c_1^2})\sigma]$, which is also uniformly bounded.}

\section{Proof of Proposition \ref{prop L}}
From \eqref{equ defi L}, we have
\begin{align*}
    d L(x^*(s), y^*(s), s) = d w(x^*(s), y^*(s), s) + c d w_y(x^*(s), y^*(s), s), \quad \forall s \in [t,T].
\end{align*}
On the one hand, applying It\^{o}'s lemma under the optimal control $\lambda^*$, we obtain
    \begin{align*}
    dw(x^*(s), y^*(s), s) &= w_{s} ds + w_{x} dx^*(s) + w_{y} dy^*(s) + \frac{1}{2} w_{xx} \langle dx^*(s)\rangle^2\\
    &\quad + \frac{1}{2} w_{yy} \langle dy^*(s)\rangle^2 + w_{xy}\langle dx^*(s), dy^*(s)\rangle\\
    &= w_{s} ds + w_{x} \big(\sigma \lambda^* d s+\sigma d B_s\big) + w_{y} \big(\frac{c}{2} (\lambda^*)^2 ds+c \lambda^* dB_s\big)\\
    & \quad + \frac{\sigma^2}{2} w_{xx} ds + \frac{(c \lambda^*)^2}{2} w_{yy} ds + w_{xy} \sigma c \lambda^* ds\\
    &= \bigg( w_{s} + \frac{\sigma^2}{2} w_{xx} + \frac{(c \lambda^*)^2}{2} w_{yy} + \sigma\lambda^* w_{x}\\
    &\qquad + \frac{c}{2} (\lambda^*)^2 w_{y} + \sigma c \lambda^* w_{xy}\bigg) ds + \big( \sigma w_{x} + c \lambda^* w_{y}\big)dB_{s}\\
    &= \big( \sigma w_{x} + c \lambda^* w_{y}\big)dB_{s},
    \end{align*}
where the last equation follows from PDE \eqref{equ w visco}.

On the other hand, we have
\begin{align*}
    dw_y(x^*(s), y^*(s), s) &= w_{ys} ds + w_{yx} dx^*(s) + w_{yy} dy^*(s) + \frac{1}{2} w_{yxx} \langle dx^*(s)\rangle^2\\
    &\quad + \frac{1}{2} w_{yyy} \langle dy^*(s)\rangle^2 + w_{xyy}\langle dx^*(s), dy^*(s)\rangle\\
    &= w_{ys} ds + w_{yx} \big(\sigma \lambda^* d s+\sigma d B_s\big) + w_{yy} \big(\frac{c}{2} (\lambda^*)^2 ds+c \lambda^* dB_s\big)\\
    & \quad + \frac{\sigma^2}{2} w_{yxx} ds + \frac{(c \lambda^*)^2}{2} w_{yyy} ds + w_{xyy} \sigma c \lambda^* ds\\
    &= \bigg( w_{ys} + \frac{\sigma^2}{2} w_{yxx} + \frac{(c \lambda^*)^2}{2} w_{yyy} + \sigma \lambda^* w_{yx}\\
    &\qquad + \frac{c}{2} (\lambda^*)^2 w_{yy} + \sigma c \lambda^* w_{xyy}\bigg) ds + \big( \sigma w_{yx} + c \lambda^* w_{yy}\big)dB_{s}\\
    &= \big( \sigma w_{yx} + c \lambda^* w_{yy}\big)dB_{s},
\end{align*}
where the last equation follows from taking $\frac{\partial}{\partial y}$ on both sides of \eqref{equ w visco}.

Finally, combining the results above and utilizing \eqref{def alpha_star}, we obtain
\begin{align*}
    d L(x^*(s), y^*(s), s) &= d w(x^*(s), y^*(s), s) + c d w_y(x^*(s), y^*(s), s) \\
    &= \big(\sigma w_{x} + c \lambda^* w_{y} + c\sigma w_{yx} + c^2 \lambda^* w_{yy}\big)dB_{s}\\
    &= 0,
\end{align*}
which follows from the first-order condition of \eqref{equ w visco}.
Therefore, $$L(x^*(s), y^*(s), s) = L(x, y, t), \quad \forall s \in [t, T].$$
The terminal condition follows immediately from $w(x, y, T) = U_P \big( x - U_A^{-1}(y) \big)$.

\section{Proof of Theorem \ref{thm equi}}
First, we prove that $u(x, y, t) \leqslant w(x, y, t)$, $\forall x, y \in \mathbb{R}, t\in [0, T]$.
For any $C \in \mathcal{C}(x, y, t)$, it follows from the definition that 
\[
\begin{aligned}
w(x, y, t) &= \sup_{\Lambda \in \mathcal{A}(x, y, t)} \mathbb{E} \left[ U_P \bigg( x^{\lambda}(T)  - U_A^{-1} \left( y^{\lambda}(T) \right) \bigg)\right] \\
&\geqslant \mathbb{E} \left[ U_P \bigg( x^C(T) - U_A^{-1} \left( y^C(T) \right) \bigg) \right] \\
&= \mathbb{E} \left[ U_P \left( x^C(T) - C(x^C(T)\right) \right].
\end{aligned}
\]
Consequently, $w(x, y, t) \geq \sup \limits_{C \in \mathcal{C}(x, y, t)} \mathbb{E}\left[U_P\left(x^C(T)-C\left(x^C(T)\right)\right]=u(x, y, t)\right.$.

 Next, we show that $u(x, y, t) \geqslant w(x, y, t)$, $\forall x, y \in \mathbb{R}, t\in [0, T]$. From Proposition \ref{prop L}, given a constant $l$  
 we can define a function $v^*(x, s) $ as the real value satisfying  
 \begin{align}\label{equ v*}
   l = L(x, v^*(x, s), s).  
  \end{align}
 {In particular, we define $C^*(x; l)$ by \eqref{equ C*0}. It then follows from \eqref{equ v*} that }
 \begin{align}\label{equ C*}
 C^*(x; l) = U_A^{-1}(v^*(x, T)). 
 \end{align}

The following two lemmas justify that $v^*$ is well-defined.  

We begin by considering the terminal time $T$. 
\begin{lemma} \label{prop v* exit}
Assume {$l \in \mathcal{L}$.}
Then for any $x\in \mathbb{R}$, $v^*(x, T)$ is well-defined, which is $C^6$ and grows at most linearly in $x$.  
\end{lemma} 
 \begin{proof}[Proof of Lemma \ref{prop v* exit}]
 The existence, uniqueness, {regularity,}
 and growth condition of $v^*(x,T)$ follow immediately from Lemma \ref{lem term Y} (i).  
 \end{proof}
We now consider the general time $t<T$.  
 \begin{lemma} \label{prop_pde_v_star}
     The solution $v^*(x, t)$ to \eqref{equ v*} exists for any $x\in \mathbb{R}$, $t\in [0, T]$. Moreover, it is the unique classical solution to
\begin{equation}
    \begin{cases}
        v_s(x, s)+\frac{\sigma^2}{2} v_{xx}(x, s)+\frac{\sigma^2}{2 c}(v_x(x, s))^2=0 \quad \forall x \in \mathbb{R}, s \in[t, T), \\
        v(x, T)=U_A(C^*(x; l)) \quad \forall x \in \mathbb{R}. 
    \end{cases}
    \label{pde_v_star}
\end{equation}
 \end{lemma}
\begin{proof}[Proof of Lemma \ref{prop_pde_v_star}]
The above PDE admits a unique classical solution $v^{C^*}(x, s)$ satisfying a linear growth condition. This follows from the fact that $m(x, t): = \exp(v^{C^*}(x, t)/c)$ satisfies the linear PDE \eqref{equ m0}, along with {\eqref{equ C*} and the linear growth condition of $v^*(x,T)$ established in Lemma \ref{prop v* exit}}. 
 
  Define the domain $$\Omega_v:= \{(x, s) \in \mathbb{R}\times[0, T]| |v^*(x, s)|< \infty \},$$ which is a relative open subset of $\mathbb{R}\times [0, T]$ containing $\mathbb{R}\times \{T\}$. Within $\Omega_v$, implicit differentiation of \( l = L(x, v^*(z, s), s) \) yields
    \begin{align}
    0 &= \frac{\partial}{\partial s} L\left(x, v^*(x, s), s\right) = L_s+ L_v v_s^*, \label{equ Ls}\\
    0 &= \frac{\partial }{\partial x} L\left(x, v^*(x, s), s\right) = L_x + L_v v_x^*, \label{equ Lz}\\
    0 &= \frac{\partial^2}{\partial x^2} L\left(x, v^*(x, s), s\right) = L_{xx} + 2 L_{xv} v_x^*+ L_{vv}\left(v_x^*\right)^2 + L_v v_{xx}^*. \label{equ Lzz}
    \end{align}
    From \eqref{equ Lz}, we have
    \begin{equation}
    v_x^*(x, s) = -\frac{L_x}{L_v}.
    \label{v_x_star}
    \end{equation}
    
     Furthermore, from \eqref{operator_w}, we have 
    \begin{equation}
    w_s(x, v, s) + \frac{\sigma^2}{2} w_{xx}(x, v, s) = \frac{\sigma^2}{2 c} \frac{L_x^2(x, v, s)}{L_v(x, v, s)} \quad \text{for } x \in \mathbb{R}, v \in \mathbb{R}, s \in [0, T]. 
    \label{w&L}
    \end{equation}
   Differentiating the equation with respect to \( v \), we obtain
    \begin{equation}
    c\left(w_v\right)_s + \frac{c  \sigma^2}{2}\left(w_v\right)_{xx} = \sigma^2 \frac{L_x L_{xv}}{L_v} - \frac{\sigma^2 L_x^2 L_{vv}}{2 L_v^2}.
    \label{w&L_v}
    \end{equation}
    By the definition \( L(x, v, s) = w(x, v, s) + c w_v(x, v, s) \), adding \eqref{w&L} and \eqref{w&L_v} yields
    \begin{equation}
    L_s + \frac{\sigma^2}{2} L_{xx} = \frac{\sigma^2}{2 c} \frac{L_x^2}{L_v} + \sigma^2 \frac{L_x L_{xv}}{L_v} - \frac{\sigma^2 L_x^2 L_{vv}}{2 L_v^2}.
    \label{equ_L}
    \end{equation}
    Next, adding \eqref{equ Ls} and \eqref{equ Lzz} $\times \frac{\sigma^2}{2} $, we obtain
    \[
    0 = L_s + L_v v_s^* + \frac{\sigma^2}{2}\left(L_{xx} + 2 L_{xv} v_x^* + L_{vv} v_x^{*2} + L_v v_{xx}^*\right).
    \]
    
	Substitute \eqref{v_x_star} and \eqref{equ_L} into the equation above, we obtain
    \begin{equation}
    \begin{aligned}
        0 &= L_v v_s^* + \frac{\sigma^2}{2c} \frac{L_x^2}{L_v} + \frac{\sigma^2 L_x L_{xv}}{L_v} - \frac{\sigma^2 L_x^2 L_{vv}}{2L_v^2} + \frac{\sigma^2}{2}\left(2 L_{xv} v_x^* + L_{vv} {v_{x}^*}^2 + L_v v_{xx}^*\right)\\
        & = L_v v_s^* + \frac{\sigma^2}{2c} \frac{L_x^2}{L_v} + \frac{\sigma^2}{2} L_v v_{xx}^*\\
        &= {L_v}\left(v_s^* + \frac{\sigma^2}{2} v_{xx}^* + \frac{\sigma^2}{2c}\left(v_x^*\right)^2\right).
        \notag
    \end{aligned}
    \end{equation}
Since $L_v <0$ by assumption, it follows that $v^*$ satisfies \eqref{pde_v_star} in $\Omega_v$, with the same terminal condition. This result, combined with the comparison principle, proves the claim. 
\end{proof}

\begin{proposition} \label{prop_C_prime_bdd}
  For any fixed $l$,
  {$C^*(x; l)\in \mathcal{C}(x,y,t)$.}
\end{proposition}
\begin{proof}[Proof of Proposition \ref{prop_C_prime_bdd}]
    First, the {regularity} and growth condition of $C^*$ are verified by Lemma \ref{prop v* exit}. 
    
    We next show $\frac{1}{c_1} \leq (C^*)'(x;l) \leq c_1$ {and $-c_1 \leq (C^*)''(x;l) \leq c_1$} for some constant $c_1 > 1$. Indeed, by \eqref{equ C*} and \eqref{v_x_star}, we have
    \begin{align}
        (C^*)'(x;l) &= \frac{1}{U_A'(U_A^{-1}(v^*(x, T)))} v^*_x(x,T) = -\frac{1}{U_A'(U_A^{-1}(v^*(x, T)))} \frac{L_x\big(x,v^*(x, T),T\big)}{L_v\big(x,v^*(x, T),T\big)},
    \end{align}
    {
    and
    \begin{align}
        (C^*)''(x;l) &= \frac{v^*_{xx} U_A' - (v^*_x)^2 U_A'' \frac{1}{U_A'}}{(U_A')^2} = \frac{v^*_{xx}}{U_A'} - \frac{(v^*_x)^2 U_A''}{(U_A')^3}.
    \end{align}
    }
    {It follows immediately from Lemma \ref{lem term Y} (iii) that $v^*_{x}(x,T)$ and $v^*_{xx}(x,T)$ are both uniformly bounded. Consequently, the proof is completed by Assumption \ref{assumption Ua}. It follows that $C^*(x; l)\in\mathscr{C}$.}
    
    Finally, we verify the existence of strong solutions and \eqref{contract0}.
    By Proposition \ref{prop v C2}, we have $\lambda^{C^*} \in \mathcal{A}_\lambda(x,t)$. Furthermore, we can define $y^{C^*}(s)$ by \eqref{dyn_y_c} so that the strong solutions exist. The inequality \eqref{contract0} follows from $l = L(x,y,t) = L(x,v^*(x,t),t)$ and $L_y < 0$.
\end{proof}

Therefore, for any $(x, y, t)$, we can calculate $l=L(x, y, t)$ and define the contract as in \eqref{equ C*}. 
By the equality $L(x^*(s), y^*(s), s) = l = L(x^*(s), v^*(x^*(s), s), s)$ from Proposition \ref{prop L} and the definition of $v^*(x^*(s), s)$, and the fact that $L_y<0$, it follows that $v^*(x^*(s), s) = y^*(s)$.
Consequently, together with \eqref{alpha_c} and  
\eqref{v_x_star}, we obtain 
$$
\begin{aligned}
{\lambda}^*(s) & \left.=-\frac{\sigma}{c} \frac{L_x\left(x^*(s), y^*(s), s\right)}{L_y\left(x^*(s), y^*(s), s\right)} \right| _{y^*(s)=v^*\left(x^*(s), s\right)} \\
& =\frac{\sigma}{c} v^*_x\left(x^*(s), s\right)={\lambda}^{C^*}(s) .
\end{aligned}
$$
Thus, by Proposition \ref{prop_C_prime_bdd} and \eqref{dyn_y_c}, $\Lambda^* = \{\lambda^*(s)\}_{t \leq s \leq T} \in \mathcal{A}(x,y,t)$ and 
\begin{align}\label{equ w to u 1}
    w(x, y, t) & =\mathbb{E}\big[U_P(x^*(T)-U_A^{-1}(y^*(T))\big]. 
\end{align}
Finally, it follows from \eqref{equ C*} that the right hand side of \eqref{equ w to u 1} satisfies
\begin{align}
    \mathbb{E}\big[U_P(x^*(T)-U_A^{-1}(y^*(T))\big] = &
    \mathbb{E}\left[U_P\left(x^{C^*}(T)-U_A^{-1}\big(v^*(x^*(T), T)\big)\right)\right]\\
    = & 
    \mathbb{E}\left[U_P\left(x^{C^*}(T)-C^*(x^{C^*}(T))\right)\right]\\
    \leq &  \sup_{C \in \mathcal{C}(x, y, t)} \mathbb{E}\left[ U_P\left( x^C(T) \right) - C\left( x^C(T) \right) \right] \\
    = & u(x, y, t).
\end{align}
This completes the proof.

\section{Proof of Lemma \ref{lem term Y}}
  {\bf Proof of $(ii)$:} Recall that
    \begin{equation}
     L(x, y, T)=w(x, y, T)+cw_y(x, y, T) = U_P\left(x-U_A^{-1}(y)\right)-c \frac{U_P^{\prime}\left(x-U_A^{-1}(y)\right) }{U_A^{\prime}(U_A^{-1}(y))},
    \end{equation}
    by the Assumption \ref{assumption Ua} and \ref{assumption Up},
    we finish the proof of $(ii)$ with
    \begin{align}\label{equ l UP}
      l &= U_P\left(x-U_A^{-1}(\hat{Y}^T)\right)-c U_P^{\prime}\left(x-U_A^{-1}(\hat{Y}^T)\right) \frac{1}{U_A'(U_A^{-1}(\hat{Y}^T))}\\
      &\sim  U_P\left(x-U_A^{-1}(\hat{Y}^T)\right),
    \end{align}
since $U'_A$ is bounded and Assumption \ref{assumption Up} implies that $U'_P$ and $U_P$ are of the same order with opposite signs. 
    
    \vspace{8pt}
    {\bf Proof of $(iii)$:} We first show that $\hat Y^T_x$ is uniformly bounded. 
    Taking $\partial_x$ on both sides of \eqref{equ term hatY} and we obtain 
    $$ 0 = L_x + L_y \hat{Y}_x^T. $$
    Given the growth rate of $\hat{Y}^T$ from Lemma \ref{lem term Y} (i), together with Assumption \ref{assumption Ua} and Assumption \ref{assumption Up}, it follows that
    \begin{align}
    L_x &= U'_P-c \frac{U_P^{\prime\prime}}{{U_A^{\prime}}} {\sim l},\\
    L_y &= -\frac{U_P'}{U_A'} + c \frac{U_P''}{(U_A')^2} + c \frac{U_P' U_A''}{(U_A')^3} {\sim l}.\label{equ Ly sim l}
    \end{align}
    Moreover, 
    \begin{align}
        \frac{L_y}{L_x} &= \frac{-\frac{U_P'}{U_A'} + c \frac{U_P''}{(U_A')^2}}{U'_P-c \frac{U_P^{\prime\prime}}{U_A'}} + \frac{c U_P' \frac{U_A''}{(U_A')^3}}{U'_P-c \frac{U_P^{\prime\prime}}{U_A'}} = -\frac{1}{U_A'} + \frac{c U_P' \frac{U_A''}{(U_A')^3}}{U'_P-c \frac{U_P^{\prime\prime}}{U_A'}}.
    \end{align}
    Noticing that
    $$
    0 \geq \frac{c U_P' \frac{U_A''}{(U_A')^3}}{U'_P-c \frac{U_P^{\prime\prime}}{U_A'}} \geq \frac{c U_P' \frac{U_A''}{(U_A')^3}}{U'_P} = c \frac{U_A''}{(U_A')^3} \geq -c \cdot c_0^4,
    $$
    we see $\hat{Y}_x^T \in [\frac{1}{c_0 + c \cdot c_0^4}, c_0]$, which implies that it is globally uniformly bounded.

    We next show that $\hat{Y}_{xx}^T$ is globally uniformly bounded. We have
    \begin{align}
        \hat{Y}^T_{xx}(x,l) &= \frac{2L_x L_y L_{xy}-L_{xx}L_y^2-L_{yy}L_x^2}{(L_y)^3} = 2\frac{L_x}{L_y}\frac{L_{xy}}{L_y} - \frac{L_{xx}}{L_y} - \frac{L_{yy}}{L_y} \left(\frac{L_x}{L_y}\right)^2.
    \end{align}
    {Since $\hat{Y}_x^T = -\frac{L_x}{L_y}$ is globally uniformly bounded, it is sufficient to show that $\frac{L_{xy}}{L_{y}}$, $\frac{L_{yy}}{L_{y}}$, and $\frac{L_{xx}}{L_{y}}$ are bounded.}

    By direct calculation, we have
    \begin{align}
         L_y &= -\frac{U_P'}{U_A'} + c \frac{U_P''}{(U_A')^2} + c \frac{U_P' U_A''}{(U_A')^3}\leq -\frac{U_P'}{U_A'} <0, \\
        L_{xx} &= U_P'' - c \frac{U^{(3)}_P}{U_A'},\\
        L_{xy} &= -\frac{U_P''}{U_A'} - c \frac{-U_P^{(3)} \frac{1}{U_A'} U_A' - U_A''\frac{1}{U_A'}U_P''}{(U_A')^2} = - \frac{U_P''}{U_A'} + c\frac{U_P^{(3)}}{(U_A')^2} + c \frac{U_A'' U_P''}{(U_A')^3},\\
        L_{yy} 
        &= \frac{U_P''}{(U_A')^2} + \frac{U_A''U_P'}{(U_A')^3} - \frac{cU_P^{(3)}}{(U_A')^3} - \frac{ 3c U_A'' U_P''}{(U_A')^4} + \frac{c U_P' U_A^{(3)}}{(U_A')^4} - \frac{3 c (U_A'')^2 U_P'}{(U_A')^5},.
    \end{align}
    The proof is completed by noticing that $U'_A$, $U''_A$ are bounded, and that $U''_P$ and $U^{(3)}_P$ are controlled by $U'_P$ under Assumption \ref{assumption Ua} and Assumption \ref{assumption Up}.
    
    {Finally we prove that $\hat{Y}_l^T \sim \frac{1}{l}$. Taking $\partial_l$ on both sides of \eqref{equ term hatY}, we obtain
        $$ 1 = L_y \hat{Y}_l^T. $$
    By \eqref{equ Ly sim l}, it follows that
    \begin{align}
        \hat{Y}_l^T \sim \frac{1}{l} \sim \frac{1}{U_P\left(x-U_A^{-1}(\hat{Y}^T)\right)}.
    \end{align}
    }
    
    \vspace{8pt}
    {{\bf Proof of $(iv)$:} We proceed by induction. For integers $2 \leq k \leq 6$, we observe that
    $$
    \frac{\partial^k \hat{Y}^T}{\partial x^k} = \frac{\frac{\partial^k L}{\partial x^k} + \frac{\partial^k L}{\partial y^k} ( \hat{Y}_x^T)^k + k \sum_{i=1}^{k-1}\frac{\partial^{k} L}{\partial x^{k-i} \partial y^{i}}(\hat{Y}_x^T)^i + \text{lower order terms}}{-\frac{\partial L}{\partial y}},
    $$
    Indeed, each $\frac{\partial^{k} L}{\partial x^{k-i} \partial y^{i}}, \, 0 \leq i \leq k$, $2 \leq k \leq 6$, is a sum of finitely many terms, each of which is a ratio of $U_P^{(m)}U_A^{(n)}$ to $(U_A')^{p}$.
    Notice that $U_A^{(n)}$ and $(U_A')^{p}$ are all bounded from Assumption \ref{assumption Ua}.} Thus, given \eqref{equ UP derivartive}, we have 
    $$
    \left|\frac{\partial^k \hat{Y}^T}{\partial x^k}\right| \leq \frac{g(|l|)}{|l|^d},
    $$
    where $d > 0$ is a finite constant and $g(\cdot)$ is a polynomial function of finite degree. Although $g$ and $d$ depend on $k$, the finiteness of $k$ allows us to take a function $g$ and constant $d$ valid for all $2\leq k\leq 6$. 
    {Applying the same argument to $\frac{\partial^{k} \hat{Y}^T(x, l)}{\partial l^k}$ and $\frac{\partial^{i+j} \hat{Y}^T(x, l)}{\partial x^i \partial l^j}$ completes the proof of $(iv)$.
    }

\section{Proof of Lemma \ref{lemma grow L}}

Before everything, we prove the following lemma for the later use. 
\begin{lemma} \label{lemma_growth_derivative_YT}
{$\left|\hat{Y}^T(x,l) - \hat{Y}(x,l,t)\right| \leq c_y$ for some constant $c_y \geq 0$ independent of $(x, l)$ and $t$.}
\end{lemma}
\begin{proof}[Proof of  Lemma \ref{lemma_growth_derivative_YT}]
From Lemma \ref{lem term Y} (iii), we have $\hat{Y}^T_x$ and $\hat{Y}^T_{xx}$ are both globally uniformly bounded. Thus, we consider $\hat{Z}(x,l,t) = \hat{Y}^T(x,l) + c_y(T-t)$, which is a viscosity super-solution to PDE \eqref{equ hat Y}. Also, {$e^{\frac{\hat{Z}(x,l,t)}{c}}$ is a viscosity super-solution} to PDE \eqref{equ mhat} for a sufficiently large $c_y$. By the comparison principle for PDE \eqref{equ mhat}, we have $\hat{Y}^T(x,l) + c_y(T-t) \geq \hat{Y}(x,l,t)$. Similarly, $\hat{Y}^T(x,l) - c_y(T-t) \leq \hat{Y}(x,l,t)$. Hence, we complete the proof.
\end{proof}

We prove Lemma \ref{lemma grow L} in six steps.

\vspace{8pt}
{{$(i)$:} We show that $\left|\hat{L}\right|\leq c_2 ( 1 + e^{c_2 ({|x|+|y|})})$.}

By Lemma \ref{lem term Y} (ii), we have
$$
0>\hat{L} \sim U_P\left(x-U_A^{-1}(\hat{Y}^T(x,\hat{L}))\right).
$$
Moreover, by Assumption \ref{assumption Up},
we have  
\begin{align}
 |U_P\left(x-U_A^{-1}(\hat{Y}^T(x, \hat{L}))\right)| 
&\leq  c_0 \big(1+ e^{c_0 |x-U_A^{-1}(\hat{Y}^T(x, \hat{L})) |} \big)\\
&\leq  c_0 \big(1+ e^{c_0 (|x| + |U_A^{-1}(\hat{Y}^T(x, \hat{L}))|)} \big).
\end{align}
Additionally, by Lemma \ref{lemma_growth_derivative_YT}, we have
\begin{align}
|U_A^{-1}(\hat{Y}^T(x, \hat{L}))| &\leq |U_A^{-1}(\hat{Y}(x, \hat{L}, t))| + \max_{z} |(U_A^{-1})'(z)|  c_y \\
&= |U_A^{-1}(y)| + \max_{z} |(U_A^{-1})'(z)| c_y\\
&\leq c_0' (1+ |y|)  
\end{align}
for a constant $c_0'>0$.
Therefore,
\begin{align}
    |\hat{L}| \lesssim c_2 ( 1 + e^{c_2 ({|x|+|y|})}).
\end{align}

\vspace{8pt}
{$(ii)$:} We consider $\frac{(\hat{L}_x)^2}{\hat{L}_y}$.

From equation \eqref{deri_YtoL} we see 
    \begin{align}\label{equ L Y relation}
    \frac{(\hat{L}_x(x, y, t))^2}{\hat{L}_y(x, y, t)} = \frac{(\hat{Y}_x(x, l, t))^2}{\hat{Y}_l(x, l, t)}.
    \end{align}
 
{On the one hand, Lemma \ref{lem term Y} already verifies that $\hat{Y}_x^T$ is globally uniformly bounded for any $(x,l) \in \mathbb{R} \times \mathcal{L}$. By taking $\partial_x$ on PDE \eqref{equ mhat}, we see that $\hat {m}_x$ and $\hat m$ satisfy the same PDE operator.
Since $\hat{Y}_x(x,l,t) = c \frac{\hat{m}_x(x,l,t)}{\hat{m}(x,l,t)}$ is bounded at $t= T$ by Lemma \ref{lem term Y} (iii), the comparison principle ensures the {uniform boundedness} of $\hat{Y}_x$.
}
On the other hand, by taking $\partial_l$ on PDE \eqref{equ mhat}, we see that $\hat {m}_l$ and $\hat m$ also satisfy the same PDE operator. Given that $\hat{Y}_l(x,l,t) = c \frac{\hat{m}_l(x,l,t)}{\hat{m}(x,l,t)}$ and $\hat{Y}^T_l(x,l) \sim \frac{1}{l}$ from Lemma \ref{lem term Y} (iii), the comparison principle for PDE \eqref{equ mhat} yields
    \begin{align}\label{equ Yl es 1}
        \hat{Y}_l(x,l,t) \sim \frac{1}{l} \sim \frac{1}{U_P\left(x-U_A^{-1}(\hat{Y}^T(x, l))\right)}.
    \end{align}
By Assumption \ref{assumption Up},  
we have  
\begin{align}
 |U_P\left(x-U_A^{-1}(\hat{Y}^T(x, l))\right)| 
&\leq  c_0 \big(1+ e^{c_0 |x-U_A^{-1}(\hat{Y}^T(x, l)) |} \big)\\
&\leq  c_0 \big(1+ e^{c_0 (|x| + |U_A^{-1}(\hat{Y}^T(x, l))|)} \big).
\end{align}
Moreover, by Lemma \ref{lemma_growth_derivative_YT}, we have
\begin{align}
|U_A^{-1}(\hat{Y}^T(x, l))| &\leq |U_A^{-1}(\hat{Y}(x, l, t))| + \max_{z} |(U_A^{-1})'(z)|  c_y \\
&= |U_A^{-1}(y)| + \max_{z} |(U_A^{-1})'(z)| c_y\\
&\leq c_0' (1+ |y|)  
\end{align}
for a constant $c_0'>0$. 

Therefore,
\begin{align}
    |\frac{(\hat{L}_x(x, y, t))^2}{\hat{L}_y(x, y, t)}|  = |\frac{(\hat{Y}_x(x, l, t))^2}{\hat{Y}_l(x, l, t)}| \lesssim c_2 ( 1 + e^{c_2 ({|x|+|y|})}).  \label{equ Yl es 2}
\end{align}

\vspace{8pt}
{$(iii)$:} We show that $\left|\frac{\partial^{i+j} \left(\frac{(\hat{L}_x)^2}{\hat{L}_y}\right)}{\partial x^i \partial y^j}\right| \leq c_2 ( 1 + e^{c_2 ({|x|+|y|})})$. 

Notice that
\begin{align}
    \frac{\partial^{i+j} \left(\frac{(\hat{L}_x)^2}{\hat{L}_y}\right)}{\partial x^i \partial y^j} &= \frac{\sum_{k=0}^{i+j+1}(\text{Lower order terms})\cdot\frac{\partial^{i+j+1} \hat{L}}{\partial x^{k}\partial y^{i+j+1-k}} + \text{Lower order terms}}{(\hat{L}_y)^m}.
\end{align}
Since we already show the growth condition of $\hat{L}_x$ and $\hat{L}_y$, it is sufficient to show that $\frac{\partial^{p} \hat{L}}{\partial x^{i} \partial y^{p-i}}, \, 0 \leq i \leq p$ has at most exponential growth for each $2 \leq p \leq 6$. Indeed, by taking partial derivatives $\partial_x$, and $\partial_y$ in \eqref{equ defi hatL}, we have
\begin{align}
\begin{cases}
0=\hat{Y}_x+\hat{Y}_l \hat{L}_x, \\
1=\hat{Y}_l \hat{L}_y.
\end{cases}
\end{align}
This indicates that we can express $\frac{\partial^{p} \hat{L}}{\partial x^{i} \partial y^{p-i}}$ in terms of the partial derivatives of $\hat{Y}$, up to the order $p$. 
Similar to (ii), the comparison principle and Lemma \ref{lem term Y} (iii) and (iv) imply that these derivatives of $\hat{Y}$ are no more than $g(|l|)/|l|^d$, in terms of the absolute value. Then similar to \eqref{equ Yl es 1}-\eqref{equ Yl es 2}, we derive (iii).

\vspace{8pt}
{$(iv)$:} We show that ${\left|\frac{\partial \left(\frac{(\hat{L}_x)^2}{\hat{L}_y}\right)}{\partial t}\right|} \leq c_2 ( 1 + e^{c_2 ({|x|+|y|})})$. 

We observe that
\begin{align}
    \frac{\partial \left(\frac{(\hat{L}_x)^2}{\hat{L}_y}\right)}{\partial t} &= \frac{2\hat{L}_x \hat{L}_{xt} \hat{L}_y-\hat{L}_{yt} (\hat{L}_x)^2}{(\hat{L}_y)^2}.
\end{align}
From \eqref{equ defi hatL}, by taking partial derivatives {$\partial_{t}$}, $\partial_{xt}$ and {$\partial_{yt}$}, we have
\begin{align}
    \begin{cases}
   0 &= {\hat{Y}_t+\hat{Y}_l \hat{L}_t,} \\
    0 &= \hat{Y}_{tx} + \hat{Y}_{tl} \hat{L}_{x} + \hat{Y}_{lx} \hat{L}_{t} + \hat{Y}_{ll} \hat{L}_{t} \hat{L}_{x} + \hat{Y}_{l} \hat{L}_{xt},\\
    0 &= \hat{Y}_{lt} \hat{L}_{y} + \hat{Y}_{ll}  \hat{L}_{t} \hat{L}_{y} + \hat{Y}_{l} \hat{L}_{yt}.
    \end{cases}
\end{align}
Therefore, it suffices to analyze $\hat{Y}_t$, $\hat{Y}_{tx}$, and $\hat{Y}_{tl}$.
Since $\hat{Y}$ satisfies \eqref{equ hat Y}, it follows from Lemma \ref{lem term Y} that $|\hat{Y}_t|\leq \frac{G(|l|)}{|l|^D}$ for some polynomial $G$ and integer $D>0$. 

\vspace{8pt}
{$(v)$:} We show that $\left|\frac{\partial^2 \left(\frac{(\hat{L}_x)^2}{\hat{L}_y}\right)}{\partial t^2}\right| \leq c_2 ( 1 + e^{c_2 ({|x|+|y|})})$. 

As in (iv), from \eqref{equ defi hatL}, we only need to analyze $\hat{Y}_{tt}$, $\hat{Y}_{txt}$, $\hat{Y}_{txl}$, $\hat{Y}_{llt}$, and $\hat{Y}_{ltt}$. The same arguments work here. 

\vspace{8pt}
{{$(vi)$:} We show that$\left|\frac{\partial^2 \left(\frac{(\hat{L}_x)^2}{\hat{L}_y}\right)}{\partial t \partial x}\right|,\ \left|\frac{\partial^2 \left(\frac{(\hat{L}_x)^2}{\hat{L}_y}\right)}{\partial t \partial y}\right| \leq c_2 ( 1 + e^{c_2 ({|x|+|y|})})$.}

{
We observe that
\begin{align}
    \frac{\partial^2 \left(\frac{(\hat{L}_x)^2}{\hat{L}_y}\right)}{\partial t \partial x} &= \frac{\partial}{\partial x}\left(\frac{2\hat{L}_x \hat{L}_{xt} \hat{L}_y-\hat{L}_{yt} (\hat{L}_x)^2}{(\hat{L}_y)^2}\right),\\
     \frac{\partial^2 \left(\frac{(\hat{L}_x)^2}{\hat{L}_y}\right)}{\partial t \partial y} &=  \frac{\partial}{\partial y}\left(\frac{2\hat{L}_x \hat{L}_{xt} \hat{L}_y-\hat{L}_{yt} (\hat{L}_x)^2}{(\hat{L}_y)^2}\right).
\end{align}
Like (iv), from \eqref{equ defi hatL}, we only need to analyze $\hat{Y}_{txx}$ and the same arguments hold.
}

\section{Proof of Lemma \ref{CP} (Comparison principle)}
    \textbf{Step 1:} Let $\tilde{U} = e^{\beta t}U$ and $\tilde{V} = e^{\beta t}V$. By direct calculation, $\tilde{U}$(resp. $\tilde{V}$) is a viscosity sub-solution(resp. viscosity super-solution) to
    \begin{equation} \label{pde_beta_CP}
        -w_t + \beta w - \sup \limits_{{\lambda} \in [\frac{1}{c_\lambda},c_\lambda]} \bigg\{ \frac{\sigma^2}{2} w_{xx} + \sigma c {\lambda} w_{xy} + \frac{c^2 {\lambda}^2}{2} w_{yy} +  \sigma {\lambda} w_x + \frac{c}{2} {\lambda}^2 w_y \bigg\} = 0.
    \end{equation}
    Hence, without loss of generality, we assume $\beta > 0$ and focus on the PDE \eqref{pde_beta_CP}.

     Let $\mu > 0$ and define $U_\mu = U - \frac{\mu}{t}$. It is straightforward to verify that $- \frac{\mu}{t}$ is a sub-solution to the PDE \eqref{pde_beta_CP}. Consequently, $U_\mu$ is a viscosity sub-solution to the PDE \eqref{pde_beta_CP} satisfying the growth condition specified in \eqref{equ w visco}. Additionally, $U_\mu$ is upper semi-continuous for $t > 0$, and satisfies the following condition:
    \begin{equation} \label{CP_initial_condition}
        \lim_{t \searrow 0} \sup_{x,y \in \mathbb{R}} U_\mu(x,y,t) = -\infty.
    \end{equation}
    To establish $U \leq V$, it is sufficient to show that $U_\mu \leq V$ for any $\mu>0$. 

    \textbf{Step 2:} We consider
    $$
    \phi(x,y) := e^{2D (x+y)} + e^{2D (x-y)} + e^{2D (-x+y)} + e^{2D (-x-y)}.
    $$
    Notice that $\phi_x = 2D e^{2D (x+y)} + 2D e^{2D (x-y)} - 2D e^{2D (-x+y)} - 2De^{2D (-x-y)} \leq 2D \cdot \phi(x,y)$. Similarly, it is straightforward to verify that $\phi_{y} \leq 2D \cdot \phi(x,y)$, $\phi_{xx} \leq 4D^2 \cdot \phi(x,y)$, $\phi_{xy} \leq 4D^2 \cdot \phi(x,y)$ and $\phi_{yy} \leq 4D^2 \cdot \phi(x,y)$.
    Consequently, given the boundedness of $\lambda$, there exists a constant $d > 0$ such that
    \begin{align*}
        &-\phi_t + \beta \phi - \sup \limits_{{\lambda} \in [\frac{1}{c_\lambda},c_\lambda]} \bigg\{ \frac{\sigma^2}{2} \phi_{xx} + \sigma c {\lambda} \phi_{xy} + \frac{c^2 {\lambda}^2}{2} \phi_{yy} +  \sigma {\lambda} \phi_x + \frac{c}{2} {\lambda}^2 \phi_y \bigg\},\\
        &= \beta \phi - \sup \limits_{{\lambda} \in [\frac{1}{c_\lambda},c_\lambda]} \bigg\{ \frac{\sigma^2}{2} \phi_{xx} + \sigma c {\lambda} \phi_{xy} + \frac{c^2 {\lambda}^2}{2} \phi_{yy} +  \sigma {\lambda} \phi_x + \frac{c}{2} {\lambda}^2 \phi_y \bigg\},\\
        &\geq (\beta - d) \phi.
    \end{align*}
    It follows that $\phi(x,y)$ is a super-solution to \eqref{pde_beta_CP} by choosing $\beta \geq d$.

    Let $\delta > 0$ and define $V_\delta := V + \delta \phi$. Then $V_\delta$ is a viscosity super-solution to \eqref{pde_beta_CP}. Furthermore, given the growth condition of $U$ and $V$, we have
    \begin{align} \label{infty_behavior}
        &\lim_{\substack{|x|+|y|\to\infty}} \sup_{t \in [0,T]} (U_\mu-V_\delta) (x,y,t)\\
        &= \lim_{\substack{|x|+|y|\to\infty}} \sup_{t \in [0,T]} \bigg(U - \frac{\mu}{t} - V - \delta \big(e^{2D (x+y)} + e^{2D (x-y)} + e^{2D (-x+y)} + e^{2D (-x-y)} \big)\bigg)\\
        &\leq \lim_{\substack{|x|+|y|\to\infty}} \sup_{t \in [0,T]} \bigg(U-V - \delta \big(e^{2D (|x|+|y|)} \big)\bigg)
        = -\infty.
    \end{align}

    \textbf{Step 3:} Our goal is to find the global maximum of $U_\mu-V_\delta$ on $[0,T] \times \mathbb{R}^2$ and show that it is non-positive by contradiction. By sending $\delta, \mu \rightarrow 0$, we complete the proof.

    By \eqref{infty_behavior}, there exists a constant $K > 0$ such that
    $$
    (U_\mu-V_\delta)(x,y,t) < 0, \quad \forall x, y \in [-K,K]^c, \quad \forall t \in [0,T].
    $$
    Moreover, \eqref{CP_initial_condition} implies that $\lim_{t \searrow 0}\sup_{\substack{(x,y) \in \mathbb{R}^2}} (U_\mu-V_\delta)(x,y,t) = -\infty$.
    
    Hence, there exist a compact set $\mathcal{K} \in \mathbb{R}^2$, and $t' > 0$ such that
    \begin{equation} \label{max_compact}
    \sup_{\substack{t \in [0,T] \\ (x,y) \in \mathbb{R}^2}} (U_\mu-V_\delta)(x,y,t) = \max_{\substack{t \in [t',T] \\ (x,y) \in \mathcal{K}}} (U_\mu-V_\delta)(x,y,t).
    \end{equation}
    The right-hand side of \eqref{max_compact} is attained by the Extreme Value Theorem for upper semi-continuous functions.

    Suppose $M_0 := \max_{\substack{t \in [t',T] \\ (x,y) \in \mathcal{K}}} (U_\mu-V_\delta)(x,y,t) > 0$. Then 
    \begin{equation} \label{max_nobdy}
        0 < M_0 = \max_{\substack{t \in [t',T) \\ (x,y) \in \mathcal{K}}} (U_\mu-V_\delta)(x,y,t)
    \end{equation}
    by noticing $(U_\mu-V_\delta)(x,y,T) \leq (U-V)(x,y,T) \leq 0$.
    Additionally, $\forall \epsilon > 0$, we define
    \begin{align}
        \Phi_\epsilon(t,s,x_1,y_1,x_2,y_2) &:= U_\mu(t,x_1,y_1) - V_\delta(s,x_2,y_2) - \phi_\epsilon(t,s,x_1,y_1,x_2,y_2),\label{fun_Phi}\\
        \phi_\epsilon(t,s,x_1,y_1,x_2,y_2) &:= \frac{1}{2\epsilon}\Bigg[(t-s)^2+(x_1-x_2)^2+(y_1-y_2)^2\Bigg].\label{fun_phi}  
    \end{align}
    Since $\Phi_\epsilon(t,s,x_1,y_1,x_2,y_2)$ is upper semi-continuous, the Extreme Value Theorem ensures that it attains a local maximum, denoted as $M_\epsilon$ on the compact set $[t',T]^2 \times \mathcal{K}^2$ at $(t^\epsilon,s^\epsilon,x_1^\epsilon,y_1^\epsilon,x_2^\epsilon,y_2^\epsilon)$. That is, $M_\epsilon = \Phi_\epsilon(t^\epsilon,s^\epsilon,x_1^\epsilon,y_1^\epsilon,x_2^\epsilon,y_2^\epsilon)$. Furthermore, we have the following lemma.
    \begin{lemma}
        $M_\epsilon \rightarrow M_0$ and $\phi_\epsilon(t^\epsilon,s^\epsilon,x_1^\epsilon,y_1^\epsilon,x_2^\epsilon,y_2^\epsilon) \rightarrow 0$ as $\epsilon \rightarrow 0$.
    \end{lemma}
    \begin{proof}
        Notice that
        \begin{align}
            0 < M_0 &\leq M_\epsilon = U_\mu(t^\epsilon,x_1^\epsilon,y_1^\epsilon) - V_\delta(s^\epsilon,x_2^\epsilon,y_2^\epsilon) - \phi_\epsilon(t^\epsilon,s^\epsilon,x_1^\epsilon,y_1^\epsilon,x_2^\epsilon,y_2^\epsilon) \label{ineq_phi}\\
            &\leq U_\mu(t^\epsilon,x_1^\epsilon,y_1^\epsilon) - V_\delta(s^\epsilon,x_2^\epsilon,y_2^\epsilon), \label{ineq_nophi}
        \end{align}
        where \eqref{ineq_phi} follows by fixing $s=t, x_1=x_2, y_1=y_2$. By the Bolzano–Weierstrass theorem, the bounded sequence $\big((t^\epsilon,s^\epsilon,x_1^\epsilon,y_1^\epsilon,x_2^\epsilon,y_2^\epsilon)\big)_\epsilon$ converges, up to a subsequence, to $(\Bar{t},\Bar{s},\Bar{x}_1,\Bar{y}_1,\Bar{x}_2,\Bar{y}_2) \in [t',T]^2 \times \mathcal{K}^2$ as $\epsilon \rightarrow 0$.
        Additionally, since $U_\mu(t^\epsilon,x_1^\epsilon,y_1^\epsilon) \allowbreak - V_\delta(s^\epsilon,x_2^\epsilon,y_2^\epsilon)$ is upper semi-continuous, the Extreme Value Theorem ensures that it is bounded above on the compact set $[t',T]^2 \times \mathcal{K}^2$. It then follows that $\Bar{t} = \Bar{s}, \Bar{x}_1 = \Bar{x}_2, \Bar{y}_1 = \Bar{y}_2$. Otherwise, \eqref{fun_phi} implies that $\lim_{\epsilon \rightarrow 0}\phi_\epsilon(t^\epsilon,s^\epsilon,x_1^\epsilon,y_1^\epsilon,x_2^\epsilon,y_2^\epsilon) = \infty$, contracting with \eqref{ineq_phi}. Sending $\epsilon \rightarrow 0$ in \eqref{ineq_nophi}, we obtain $M_0 \leq (U_\mu - V_\delta)(\Bar{t},\Bar{x}_1,\Bar{y}_1) \leq M_0$. Therefore, $M_0 = (U_\mu - V_\delta)(\Bar{t},\Bar{x}_1,\Bar{y}_1)$ with $(\Bar{t},\Bar{x}_1,\Bar{y}_1) \in [t',T) \times \mathcal{K}$ by \eqref{max_nobdy}. Sending $\epsilon \rightarrow 0$ again in \eqref{ineq_phi} and \eqref{ineq_nophi}, we have
        $$
        M_0 \leq \lim_{\epsilon \rightarrow 0} M_\epsilon = M_0 -\lim_{\epsilon \rightarrow 0} \phi_\epsilon(t^\epsilon,s^\epsilon,x_1^\epsilon,y_1^\epsilon,x_2^\epsilon,y_2^\epsilon) \leq M_0.
        $$
        Thus, $M_\epsilon \rightarrow M_0$ and $\phi_\epsilon(t^\epsilon,s^\epsilon,x_1^\epsilon,y_1^\epsilon,x_2^\epsilon,y_2^\epsilon) \rightarrow 0$ as $\epsilon \rightarrow 0$.
    \end{proof}

    \textbf{Step 4:}
    We start this step by re-writing the PDE \eqref{pde_beta_CP} in a more compact form and introducing Ishii's lemma.
    Denote $z = (x,y)$ and 
    $$
    H(p,R) = \sup_{\lambda} \bigg[b(\lambda) \cdot p + \frac{1}{2} \mathrm{tr} [QQ^T(\lambda) \cdot R]\bigg],
    $$
    where $p \in \mathbb{R}^2$, $R \in \mathcal{S}_2$(the set of symmetric $2 \times 2$ matrices), $b(\lambda) = \begin{pmatrix}
        \sigma \lambda, & \frac{c}{2}\lambda^2
    \end{pmatrix}$ and $QQ^T(\lambda) = \begin{pmatrix}
        \sigma^2 & \sigma c \lambda\\
        \sigma c \lambda & c^2 \lambda^2
    \end{pmatrix}$.
    Then, $U_\mu(z,t)$ (resp. $V_\delta(z,t)$) is a u.s.c. viscosity sub-solution(resp. l.s.c. viscosity super-solution) to the PDE \eqref{pde_beta_CP}, which can be rewritten as:
    \begin{equation}
        -w_t + \beta w - H(D_z w, D_z^2 w) = 0.
    \end{equation}
    \begin{lemma} \label{lemma_ish}
        (Ishii's lemma) Let $U$ (resp. $V$) be a u.s.c. (resp. l.s.c.) function on $(0,T) \times \mathbb{R}^2$. Let $\phi \in C^{1,1,2,2}\big((0,T)^2 \times \mathbb{R}^2 \times \mathbb{R}^2\big)$ and let $(\Bar{t},\Bar{s},\Bar{z}_1,\Bar{z}_2) \in (0,T) \times \mathbb{R}^2 \times \mathbb{R}^2$ be a local maximum of $U(t,z_1) - V(s,z_2) - \phi(t,s,z_1,z_2)$. Then, $\forall \eta > 0$, there exist $M, N \in \mathcal{S}_2$ such that
        \begin{align} \label{lim_jet}
            \bigg(\frac{\partial \phi}{\partial t}(\Bar{t},\Bar{s},\Bar{z}_1,\Bar{z}_2), D_{z_1} \phi(\Bar{t},\Bar{s},\Bar{z}_1,\Bar{z}_2), M\bigg) \in \Bar{\mathcal{P}}^{2,+} U(\Bar{t},\Bar{z}_1),\\
            \bigg(-\frac{\partial \phi}{\partial s}(\Bar{t},\Bar{s},\Bar{z}_1,\Bar{z}_2), -D_{z_2} \phi(\Bar{t},\Bar{s},\Bar{z}_1,\Bar{z}_2), N\bigg) \in \Bar{\mathcal{P}}^{2,-} V(\Bar{s},\Bar{z}_2),
        \end{align}
        and
        \begin{align} \label{ineq_ish}
            \begin{pmatrix}
                M & 0 \\
                0 & -N
            \end{pmatrix}
            \leq D^2_{\substack{z_1,z_2}} \phi(\Bar{t},\Bar{s},\Bar{z}_1,\Bar{z}_2) + \eta \bigg(D^2_{\substack{z_1,z_2}} \phi(\Bar{t},\Bar{s},\Bar{z}_1,\Bar{z}_2)\bigg)^2.
        \end{align}
    \end{lemma}
    \begin{remark} \label{remark_ish}
        We apply Lemma \ref{lemma_ish} with $z = (x,y)$ and
        $$
        \phi(t,s,z_1,z_2) = \frac{1}{2\epsilon}\Bigg[(t-s)^2+(z_1-z_2)^2\Bigg] = \frac{1}{2\epsilon}\Bigg[(t-s)^2+(x_1-x_2)^2+(y_1-y_2)^2\Bigg].
        $$
        Direct calculation yields
        $$
        \frac{\partial \phi}{\partial t} = -\frac{\partial \phi}{\partial s} = \frac{t-s}{\epsilon}, \quad D_{z_1} \phi = -D_{z_2} \phi = \begin{pmatrix}
            \frac{x_1-x_2}{\epsilon} & \frac{y_1-y_2}{\epsilon}
        \end{pmatrix},
        $$
        $$
        D^2_{\substack{z_1,z_2}} \phi = \frac{1}{\epsilon} \begin{pmatrix}
            I_2&-I_2\\
            -I_2&I_2\\
        \end{pmatrix}, \quad \bigg(D^2_{\substack{z_1,z_2}} \phi\bigg)^2 = \frac{2}{\epsilon^2} \begin{pmatrix}
            I_2&-I_2\\
            -I_2&I_2\\
        \end{pmatrix}.
        $$
        Furthermore, by setting $\eta = \epsilon$ in \eqref{ineq_ish}, we obtain
        \begin{equation} \label{ineq_MN}
            \begin{pmatrix}
                M&0\\
                0&-N
            \end{pmatrix} \leq \frac{3}{\epsilon}\begin{pmatrix}
                I_2&-I_2\\
                -I_2&I_2
            \end{pmatrix}.
        \end{equation}
        Let $\Sigma(\lambda) = \begin{pmatrix}
            QQ^T(\lambda) & QQ^T(\lambda)\\
            QQ^T(\lambda) & QQ^T(\lambda)
        \end{pmatrix}$. 
        It follows that
        \begin{align} \label{trac_ineq}
            \mathrm{tr} \bigg(QQ^T(\lambda)(M-N)\bigg) &= \mathrm{tr} \bigg(\Sigma(\lambda)\begin{pmatrix}
                M & 0 \\
                0 & -N
            \end{pmatrix}\bigg)\\
            &\leq \frac{3}{\epsilon}\mathrm{tr} \bigg(\Sigma(\lambda)\begin{pmatrix}
                I_2&-I_2\\
                -I_2&I_2
            \end{pmatrix}\bigg)\\
            &= \frac{3}{\epsilon}\mathrm{tr} \big((Q(\lambda) - Q(\lambda))(Q(\lambda) - Q(\lambda))^T\big) = 0.
        \end{align}
    \end{remark}
    By \eqref{lim_jet}, $\forall \epsilon > 0$, we have
    \begin{align}
        -\frac{t^\epsilon-s^\epsilon}{\epsilon} + \beta U_\mu(t^\epsilon,z_1^\epsilon) - H(
        \begin{pmatrix}
            \frac{x_1^\epsilon-x_2^\epsilon}{\epsilon} & \frac{y_1^\epsilon-y_2^\epsilon}{\epsilon}
        \end{pmatrix}
        , M) &\leq 0,\\
        -\frac{t^\epsilon-s^\epsilon}{\epsilon} + \beta V_\delta(s^\epsilon,z_2^\epsilon) - H(
        \begin{pmatrix}
            \frac{x_1^\epsilon-x_2^\epsilon}{\epsilon} & \frac{y_1^\epsilon-y_2^\epsilon}{\epsilon}
        \end{pmatrix}
        , N) &\geq 0.
    \end{align}
    Thus, by \eqref{ineq_nophi}, we have
    \begin{align}
        0 < \beta M_0 &\leq \beta \bigg[U_\mu(t^\epsilon,z_1^\epsilon) - V_\delta(s^\epsilon,z_2^\epsilon)\bigg]\\ 
        &\leq H(
        \begin{pmatrix}
            \frac{x_1^\epsilon-x_2^\epsilon}{\epsilon} & \frac{y_1^\epsilon-y_2^\epsilon}{\epsilon}
        \end{pmatrix}
        , M) - H(
        \begin{pmatrix}
            \frac{x_1^\epsilon-x_2^\epsilon}{\epsilon} & \frac{y_1^\epsilon-y_2^\epsilon}{\epsilon}
        \end{pmatrix}
        , N)\\
        &\leq \sup_\lambda \bigg[\frac{1}{2}\mathrm{tr}[QQ^T(\lambda)(M-N)]\bigg]\\
        &\leq 0,
    \end{align}
    where the last inequality follows from \eqref{trac_ineq}.
    Given $\beta >0$, we get a contradiction. Sending $\delta, \mu \rightarrow 0$ completes the proof.

\section{Auxiliary Lemmas}
\begin{lemma} \label{lemma_conts_y}
    {Let $I \subseteq \mathbb{R}$ be an open interval.} Suppose the functions $H(x,y,t) \in C(\mathbb{R} {\times I} \times [0,T])$ and $U(x,y) \in C(\mathbb{R} \times I)$, 
    satisfy 
    $$
    \left|H(x,y,t)\right|, \left|U(x,y)\right| \leq c_6(1+e^{c_6|x|}),
    $$
    {where $c_6$ is locally uniformly bounded with respect to $y \in I$}, then the classical solution $f(x,y,t)$ to the  following PDE
    \begin{equation} \label{PDE_w&H}
    \begin{cases}
    f_t+\frac{\sigma^2}{2} f_{x x}=H(x,y,t),\quad \text{for } (x, y, t) \in \mathbb{R} \times I \times[0, T),\\
    f(x,y,T) = U(x,y)
    \end{cases}
    \end{equation}
    is continuous on $\mathbb{R} \times I \times [0,T]$.
\end{lemma}
\begin{proof}[Proof of Lemma \ref{lemma_conts_y}]
The solution admits the following representation:
    \begin{align}
    f(x,y,t) &=
    \int_{\mathbb{R}} \mathcal{K}(T-t,\, x,\, \xi) \; U(\xi, y) \, d\xi - \int_{t}^{T} \int_{\mathbb{R}} \mathcal{K}(t'-t,\, x,\, \xi) \; H(\xi, y, t') \, d\xi \, dt',
    \end{align}
    for any $0 \leq t < T$, where
    $$
    \mathcal{K}(\Delta t, x, \xi) = \frac{1}{\sqrt{2\pi \sigma^2 \Delta t}}
    \, \exp\!\Bigl( -\frac{(x-\xi)^2}{2\sigma^2 \Delta t} \Bigr), \qquad \forall  \Delta t > 0.
    $$

    To establish the continuity of $f(x, y, t)$, it suffices to prove the following two results:
    
    (1). $f(x,y,t)$ is continuous in $y$. 
    
    This result follows immediately from the Dominated Convergence Theorem by noticing the locally uniformly exponential growth of $U(x,y)$ and $H(x,y,t)$ in $x$. 
    
    (2). $f(x,y,t)$ is locally uniformly continuous in $(x, t)$.   
    
    When $s<T$, $x'\in \mathbb{R}$, $y' \in I$, $f(x,y,t)$ is {uniformly continuous} in $(x,t)$ around $(x', y', s)$ due to the classical $W^{2, 1}_p$ estimate and the Sobolev Embedding Theorem. 
     When $s=T$, $x' \in \mathbb{R}$, $y' \in I$, we have 
    \begin{align}
    f(x,y',t) - f(x',y',T) &=
    \int_{\mathbb{R}} \mathcal{K}(T-t,\, x,\, \xi) \; \left(U(\xi, y')- U(x',y')\right) \, d\xi\\
    &\qquad - \int_{t}^{T} \int_{\mathbb{R}} \mathcal{K}(t'-t,\, x,\, \xi) \; H(\xi, y', t') \, d\xi \, dt'.
    \end{align}

    First, notice that the second term on the right-hand side converges to 0 locally uniformly with respect to $y'$ when $(x, t)$ converges to $(x', T)$. This is due to the exponential growth assumption of $H$.

    Next, we estimate the first term on the right-hand side. 
    {Since $U$ is continuous, by Heine–Cantor theorem, it is uniformly continuous when $|x'-x|, |y'-y|, |T-t|\leq 1$.}
    {Thus,} for any given $\epsilon>0$, {there exists a $0 < \delta \leq 1$, locally independent of $y'$, such that 
    $$
    |U(x,y') - U(x',y')| < \frac{\epsilon}{2}, \quad \forall \, |x-x'| < \delta.
    $$
    }
    Therefore, we estimate the first term on the right-hand side by splitting the integral into two regions: $|\xi -x'| {\geq} \delta$ and $|\xi -x'| {<} \delta$.
    
    First, we consider the case $|\xi -x'| {\geq} \delta$. For any $|x'-x|<\frac{\delta}{2}$ we have $|x-\xi| \geq -|x-x'| + |x'-\xi| \geq \frac{|x'-\xi|}{2}$. Setting 
    $z = \frac{\xi-x'}{\sqrt{T-t}}$, we obtain
    \begin{align}
        &\int_{|\xi -x'| {\geq} \delta} \mathcal{K}(T-t,\, x,\, \xi) \; \left|U(\xi, y')- U(x',y')\right| \, d\xi\\
        = &\int_{|\xi -x'| {\geq} \delta} \frac{1}{\sqrt{2\pi \sigma^2 (T-t)}}
        \, \exp\!\Bigl( -\frac{(x-\xi)^2}{2\sigma^2 (T-t)} \Bigr)\; \left|U(\xi, y')- U(x',y')\right| \, d\xi\\
        \leq& \int_{|\xi -x'| {\geq} \delta} \frac{1}{\sqrt{2\pi \sigma^2 (T-t)}}
        \, \exp\!\Bigl( -\frac{(x'-\xi)^2}{8\sigma^2 (T-t)} \Bigr)\; \left|U(\xi, y')- U(x',y')\right| \, d\xi\\
        = & \int_{|z| {\geq} \frac{\delta}{\sqrt{T-t}}} \frac{1}{\sqrt{2\pi \sigma^2}}\exp\!\Bigl( -\frac{z^2}{8\sigma^2} \Bigr)\; \left|U(x'+ \sqrt{T-t} z, y')- U(x',y')\right| \, dz.
    \end{align}
    By the exponential growth condition on $U$, this integral is no more than $\frac{\epsilon}{2}$ when $|x'-x|, |T-t|$ are sufficiently small. Moreover, this criterion is locally independent of $y'$.    

   Next, we consider the case $|\xi -x'| {<} \delta$. We have   
   \begin{align}
       &\int_{|\xi -x'| {<}  \delta} \mathcal{K}(T-t,\, x,\, \xi) \; \left|U(\xi, y')- U(x',y')\right| \, d\xi\\
       {<} & {\frac{\epsilon}{2}} \int_{\mathbb{R}} \mathcal{K}(T-t,\, x,\, \xi) \, d\xi\\
       = &\frac{\epsilon}{2}.
   \end{align}
    In summary, $\int_{\mathbb{R}} \mathcal{K}(T-t,\, x,\, \xi) \; \left|U(\xi, y')- U(x',y')\right| \, d\xi {<} \epsilon$ when $(x, t)$ is sufficiently close to $(x', T)$ and this criterion is locally independent of $y'$. 
    
    This completes the proof for the continuity of $f$.  
\end{proof}

\end{document}